\pdfoutput=1

\documentclass[10pt]{article} 

\usepackage{algorithmic}
\usepackage{algorithm}

\usepackage{graphicx}
\usepackage{amsmath}
\usepackage{amsthm}
\usepackage{amsbsy}
\usepackage{amssymb}
\usepackage{amsfonts}
\usepackage[dvips]{lscape}

\theoremstyle{plain}
\newtheorem{theorem}{Theorem}[section]

\newtheorem{lemma}[theorem]{Lemma}
\newtheorem{corollary}[theorem]{Corollary}
\newtheorem{remark}[theorem]{Remark}
\newtheorem{definition}[theorem]{Definition}

\DeclareMathOperator*{\argmin}{argmin}
\DeclareMathOperator*{\rank}{rank}
\DeclareMathOperator*{\nullspace}{nullspace}

\DeclareMathOperator*{\card}{card}
\DeclareMathOperator*{\size}{size}

\DeclareMathOperator*{\low}{low}
\DeclareMathOperator*{\dist}{dist}
\DeclareMathOperator*{\rad}{rad}
\DeclareMathOperator*{\radZ}{radZ}
\DeclareMathOperator*{\diam}{diam}
\DeclareMathOperator*{\img}{img}

\DeclareMathOperator*{\length}{length}
\DeclareMathOperator*{\vertex}{vert}

\newcommand\inlineM[1]{\left[ \begin{smallmatrix} #1 \end{smallmatrix} \right]}

\title{Measuring and Localizing Homology Classes}
\author{
Daniel Freedman\\
Rensselaer Polytechnic Institute\\
freedman@cs.rpi.edu\\
\and
Chao Chen\\
Rensselaer Polytechnic Institute\\
chenc3@cs.rpi.edu\\
}

\begin{document}

\maketitle


\begin{abstract}
We develop a method for measuring and localizing homology classes.  This involves two problems.  First, we define relevant notions of size for both a homology class and a homology group basis, using ideas from relative homology.  Second, we propose an algorithm to compute the optimal homology basis, using techniques from persistent homology and finite field algebra.  Classes of the computed optimal basis are localized with cycles conveying their sizes. The algorithm runs in $O(\beta^4 n^3 \log^2 n)$ time, where $n$ is the size of the simplicial complex and $\beta$ is the Betti number of the homology group.  
\end{abstract}

\section{Introduction}
\label{sec:intro}

In recent years, the problem of computing the topological features of a space has drawn much attention.
There are two reasons for this.  The first is a general observation: compared with geometric features, topological features are more qualitative and global, and tend to be more robust.  If the goal is to characterize a space, therefore, features which incorporate topology seem to be good candidates.  

The second reason is that topology plays an important role in a number of applications.  Researchers in graphics need topological information to facilitate parameterization of surfaces and texture mapping \cite{EricksonH04,CarnerJGQ05}. In the field of sensor networks, the use of homological tools is crucial for certain coverage problems \cite{deSilvaG06}.  Computational biologists use topology to study protein docking and folding problems \cite{AgarwalEHW06,Cohen-SteinerEM06}. Finally, topological features are especially important in high dimensional data analysis, where purely geometric tools are often deficient, and full-blown space reconstruction is expensive and often ill-posed \cite{Carlsson05,Ghrist}.

Once we are able to compute topological features, a natural problem is to rank the features according to their importance.  The significance of this problem can be justified from two perspectives.  First, unavoidable errors are introduced in data acquisition, in the form of traditional signal noise, and finite sampling of continuous spaces.  These errors may lead to the presence of many small topological features that are not ``real'', but are simply artifacts of noise or of sampling \cite{WoodHDS04}.  Second, many problems are naturally hierarchical.  This hierarchy -- which is a kind of multiscale or multi-resolution decomposition -- implies that we want to capture the large scale features first.  See Figure \ref{fig:2torusBigSmall} for examples.

\begin{figure}[hbtp]
    \centerline{
    \begin{tabular}{cc}
		\includegraphics[width=0.25\textwidth]{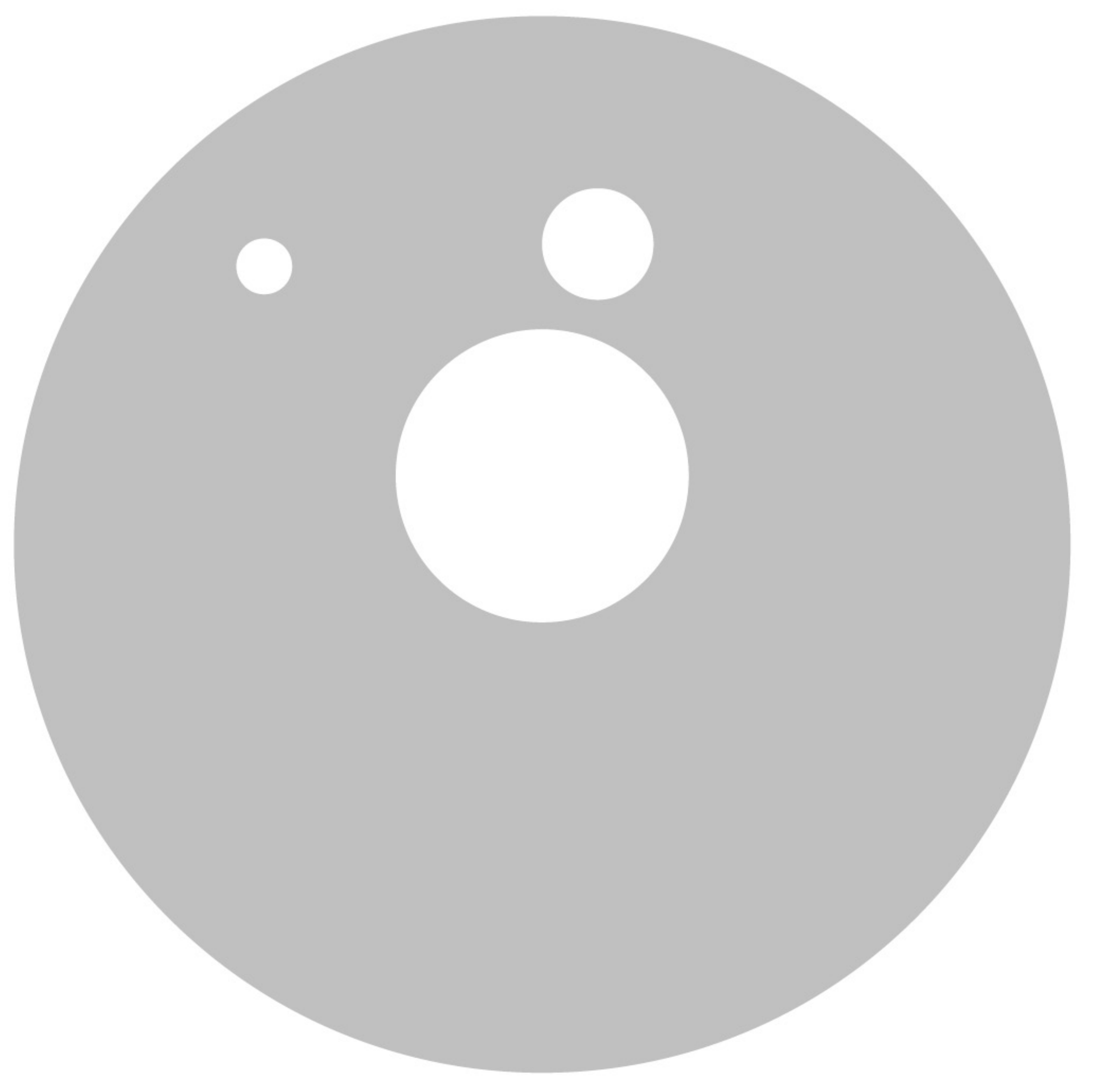} &
		\includegraphics[width=0.4\textwidth]{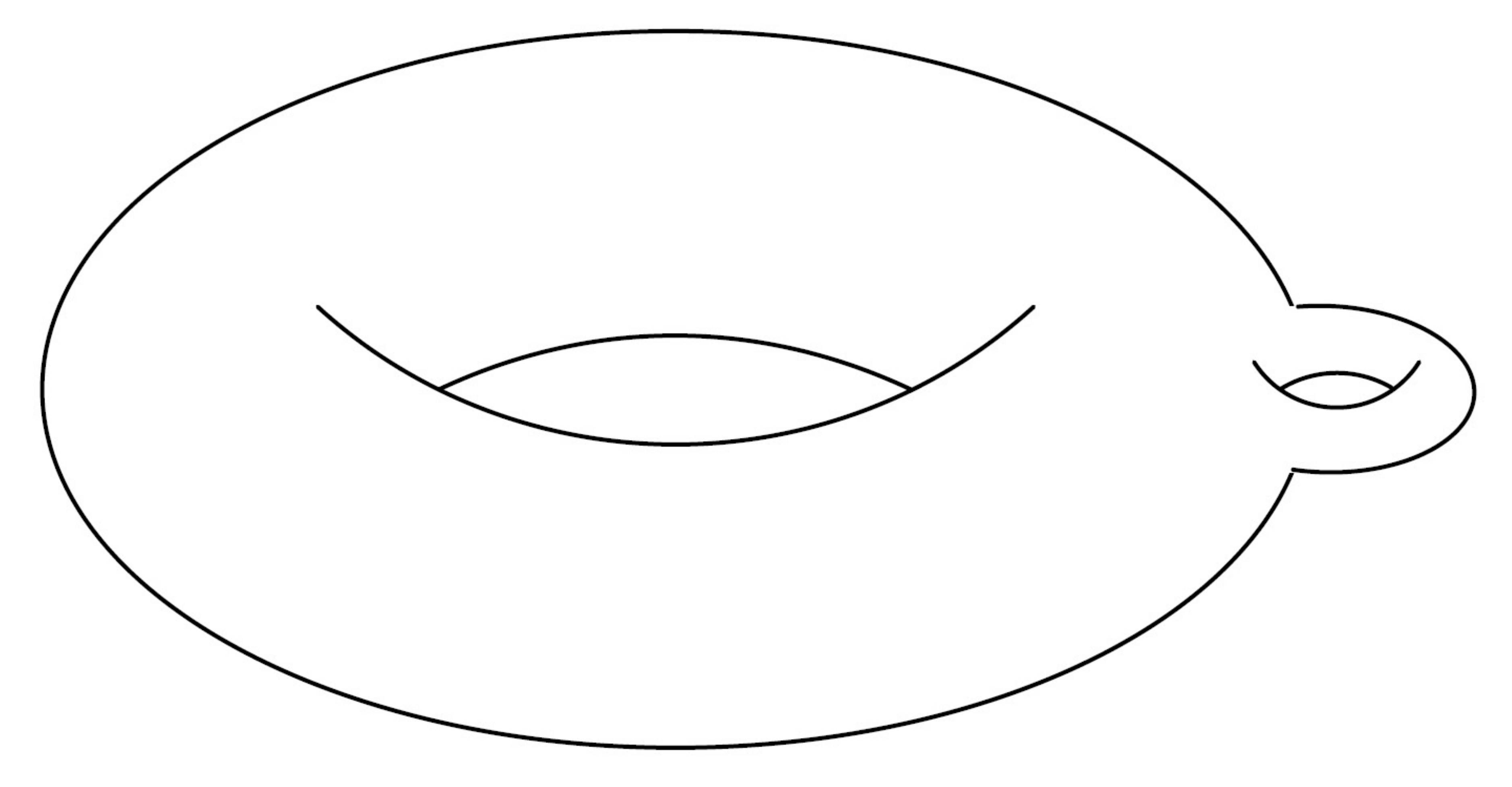}
    \end{tabular}}
    \caption{ A disk with three holes and a 2-handled torus are really more
    like an annulus and a 1-handled torus, respectively, because the large features 
    are more important.}
    \label{fig:2torusBigSmall}
\end{figure}

There are a variety of ways of characterizing topological spaces in the literature, including fundamental groups, homology groups, and the Euler characteristic. In this paper, we concentrate on homology groups as they are relatively straightforward to compute in general dimension, and provide a decent amount of information (more, say, than a coarse measure like the Euler characteristic).

Ranking the homology classes according to their importance involves the following three subproblems. 

\begin{enumerate}

\item {\bf Measuring the size of a homology class:} We need a way to quantify the size of a given homology class, and this size measure should agree with intuition.  For example, in Figure \ref{fig:badGenerator} (center), the measure should be able to distinguish the one large class (of the 1-dimensional homology group) from the two smaller classes.  Furthermore, the measure should be easy to compute, and applicable to homology groups of any dimension.

\item {\bf Localizing a homology class:} Given the size measure for a homology class, we would like to find a representative cycle from this class which, in a precise sense, has this size.  For example, in Figure \ref{fig:badGenerator} (center), the cycles $z_1$ and $z_2$ are well-localized representatives of their respective homology classes; whereas $z_3$ is not.

\begin{figure}[hbtp]
    \centerline{
    \begin{tabular}{ccc}
		\includegraphics[width=0.28\textwidth]{holeDisk0.pdf} &
		\includegraphics[width=0.28\textwidth]{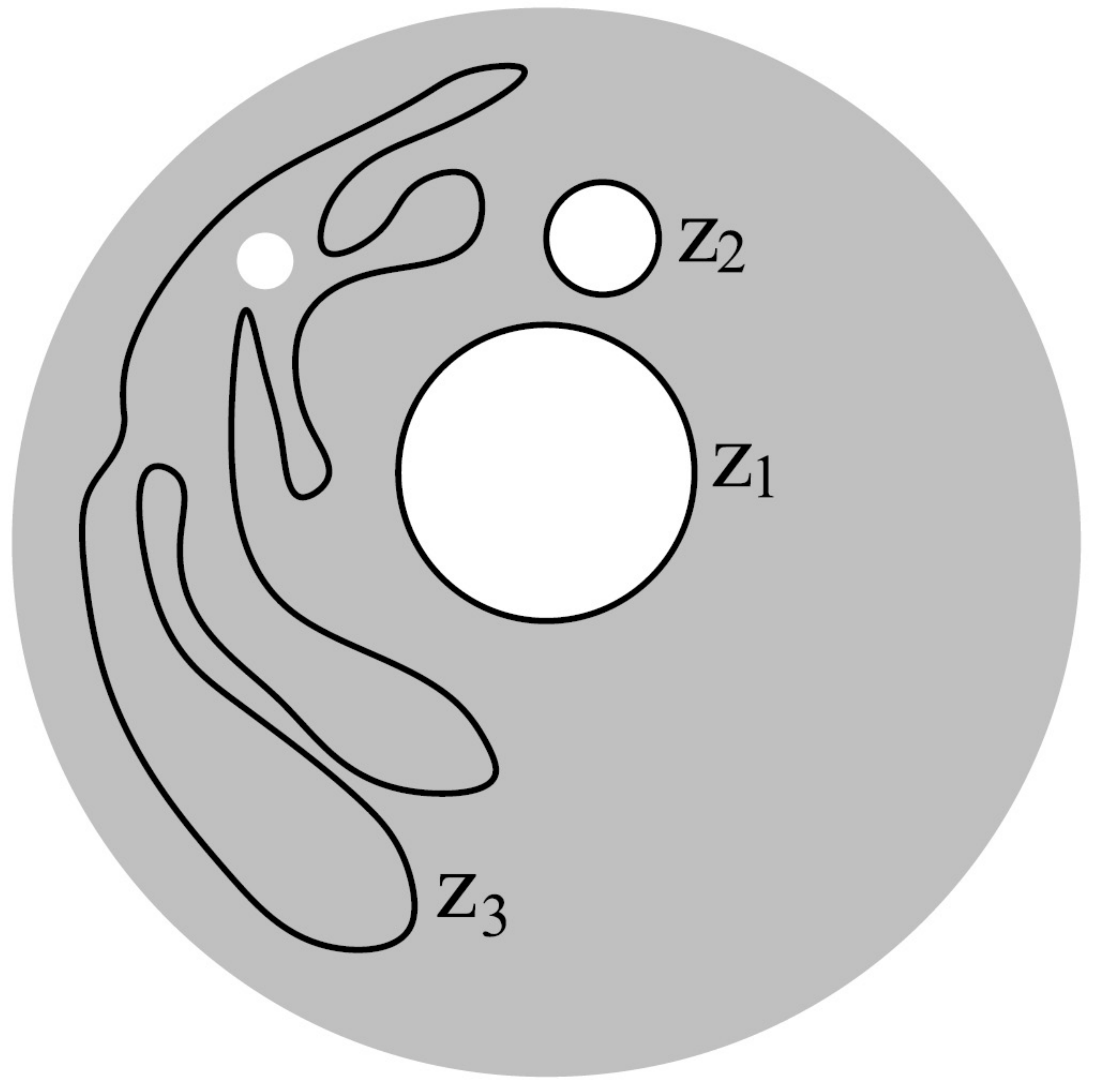} &
		\includegraphics[width=0.28\textwidth]{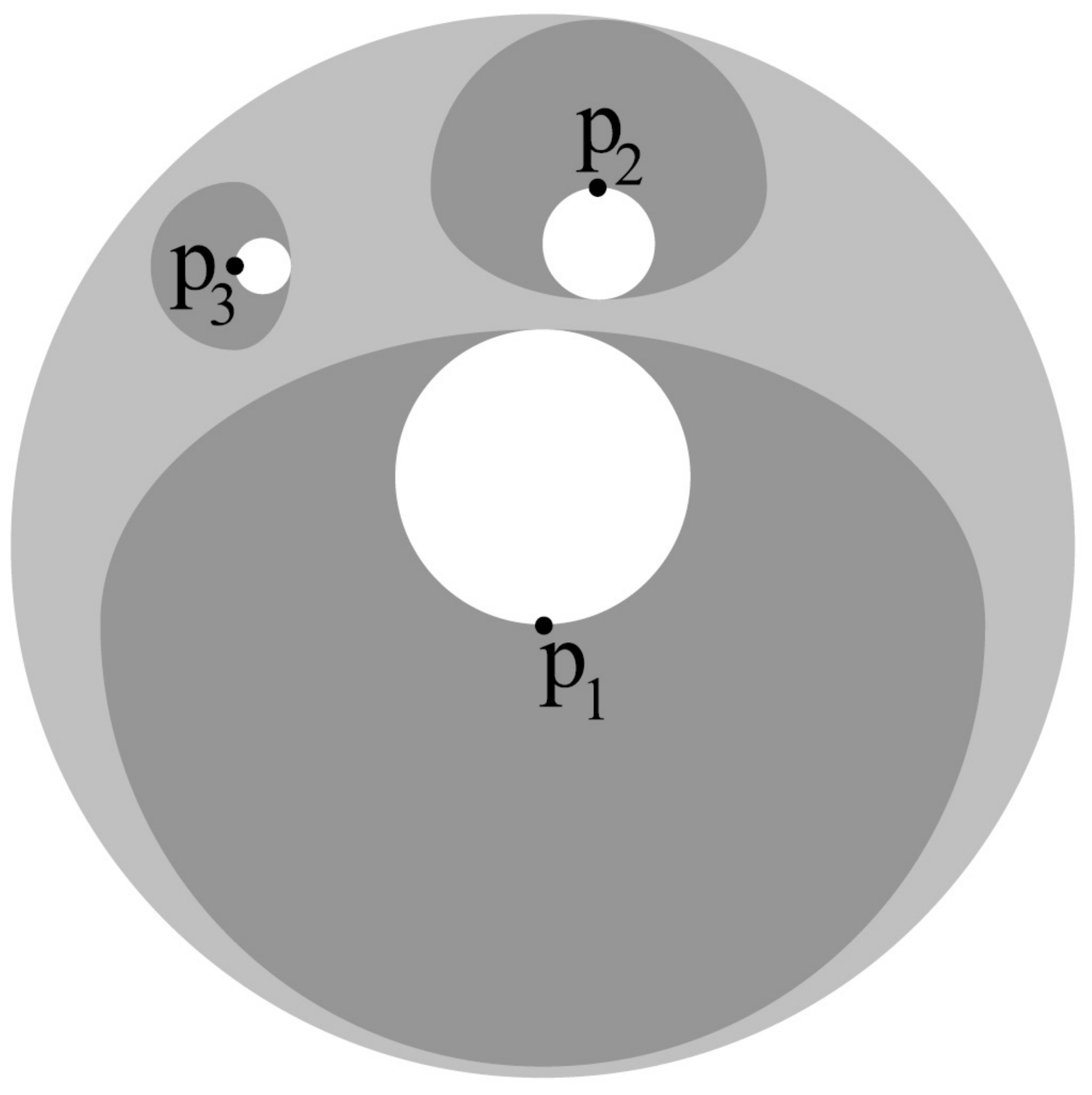}
    \end{tabular}}
    \caption{ A disk with three holes. Left: the underlying topological space. 
    Center: cycles $z_1$ and $z_2$ 
    convey the size of their respective homology classes; $z_3$ does not. 
    Right: geodesic balls measuring the 1-dimensional homology classes (used in
    Section \ref{sec:sizeDef}).}
    \label{fig:badGenerator}
\end{figure}

\item {\bf Choosing a basis for a homology group:}  We would like to choose a ``good'' set of homology classes to be the generators for the homology group (of a fixed dimension).  Suppose that $\beta$ is the dimension of this group, and that we are using $\mathbb{Z}_2$ coefficients; then there are $2^\beta-1$ nontrivial homology classes in total. For a basis, we need to choose a subset of $\beta$ of these classes, subject to the constraint that these $\beta$ generate the group.  The criterion of goodness for a basis is based on an overall size measure for the basis, which relies in turn on the size measure for its constituent classes.  For instance, in Figure \ref{fig:3circles}, we must choose three from the seven nontrivial $1$-dimensional homology classes: $\{[z_1], [z_2], [z_3], [z_1]+[z_2], [z_1]+[z_3], [z_2]+[z_3], [z_1]+[z_2]+[z_3]\}$.  In this case, the intuitive choice is $\{ [z_1], [z_2], [z_3] \}$, as this choice reflects the fact that there is really only one large cycle.

\begin{figure}[hbtp]
    \centering
    \includegraphics[width=0.45\textwidth]{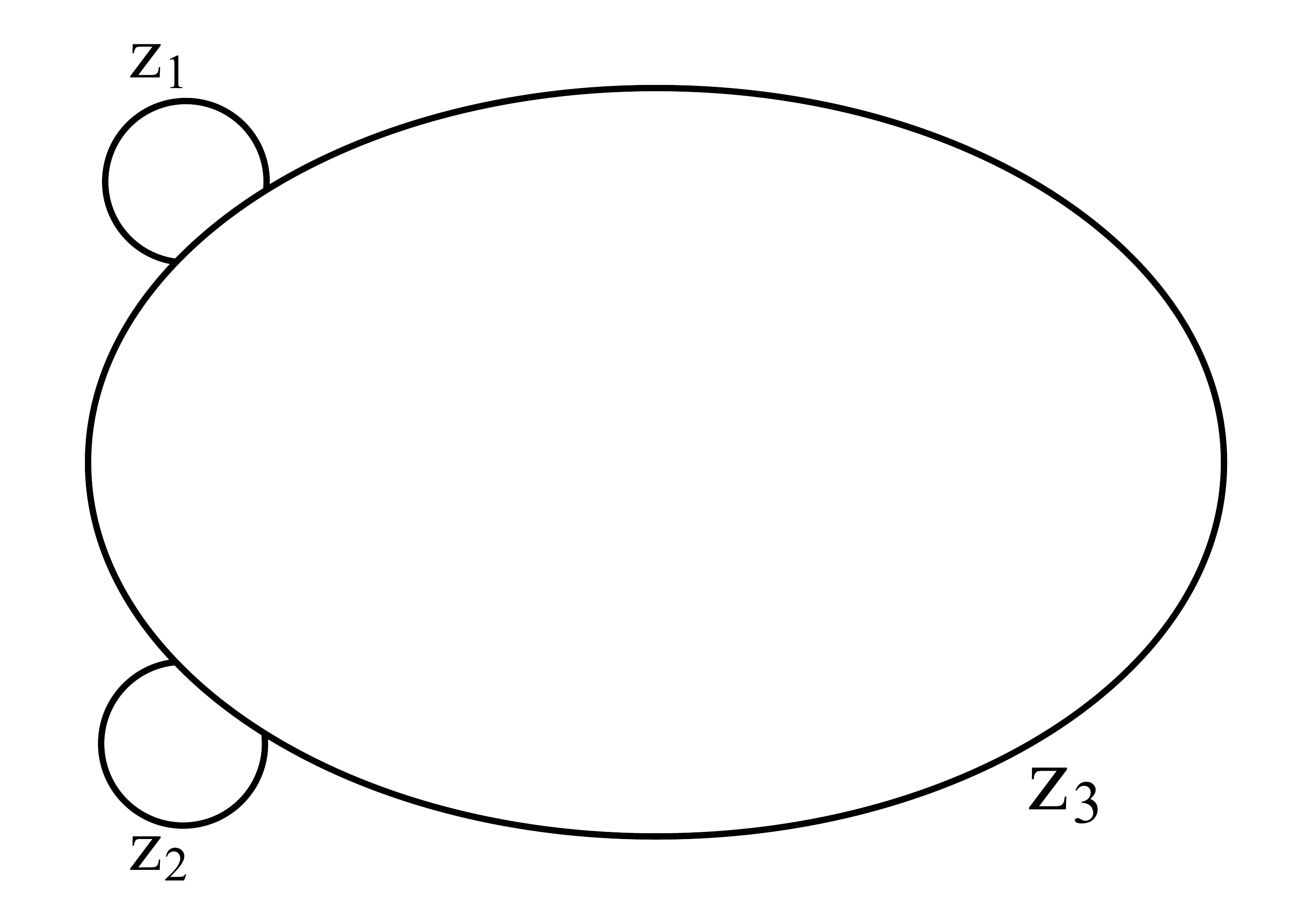}
    \caption{ A topological space formed from three circles.  See accompanying discussion in the text.}
    \label{fig:3circles}
\end{figure}

\end{enumerate}

\subsection{Related Works}
There is much work that has been done in the general field of computational topology \cite{cs.CG/9909001}.  Examples include fast algorithms for computing Betti numbers \cite{DelfinadoE93,Friedman96}, as well as techniques for relating topological spaces to their approximations \cite{NiyogiSW06,ChazalL05}; where the latter usually derive from sampled versions of the spaces.  However, in the following we will focus only on the areas of computational topology which are most germane to the current study: persistent homology and algorithms for localizing topological features.  Note that a more formal review of persistence will be given in Section \ref{sec:persistence}.

\paragraph{Persistent Homology}
Persistent homology \cite{EdelsbrunnerLZ02,Cohen-SteinerEH07,ZomorodianC05,CarlssonZ07} is designed to track the persistences of homological features over the course of a filtration of a topological space.  At first blush, it might seem that the powerful techniques of this theory are ideally suited to solving the problems we have set out.  However, due to their somewhat different motivation, these techniques do not quite yield a solution.  There are two reasons for this.  First, the persistence of a feature depends not only on the space in which the feature lives, but also on the filtering function chosen.  In the absence of a geometrically meaningful filter, it is not clear whether the persistence of a feature is a meaningful representation of its size.  Second, and more importantly, the persistence only gives information for homology classes which ultimately die; for classes which are intrinsically part of the topological space, and which thus never die, the persistence is infinite.  However, it is precisely these \emph{essential} (or non-persistent) classes that we care about.  

In more recent work, Cohen-Steiner et al.~\cite{Cohen-SteinerEH} have extended persistent homology in such a way that essential homology classes also have finite persistences. 
This extension serves to complete the theory and has some nice properties like stability, duality and symmetry for triangulated manifolds. However, the persistences thus computed still depend on the filter function, and furthermore, do not always seem to agree with an intuitive notion of size. See Figure \ref{fig:extPersist}.
\begin{figure}[hbtp]
    \centering
		\includegraphics[width=0.4\textwidth]{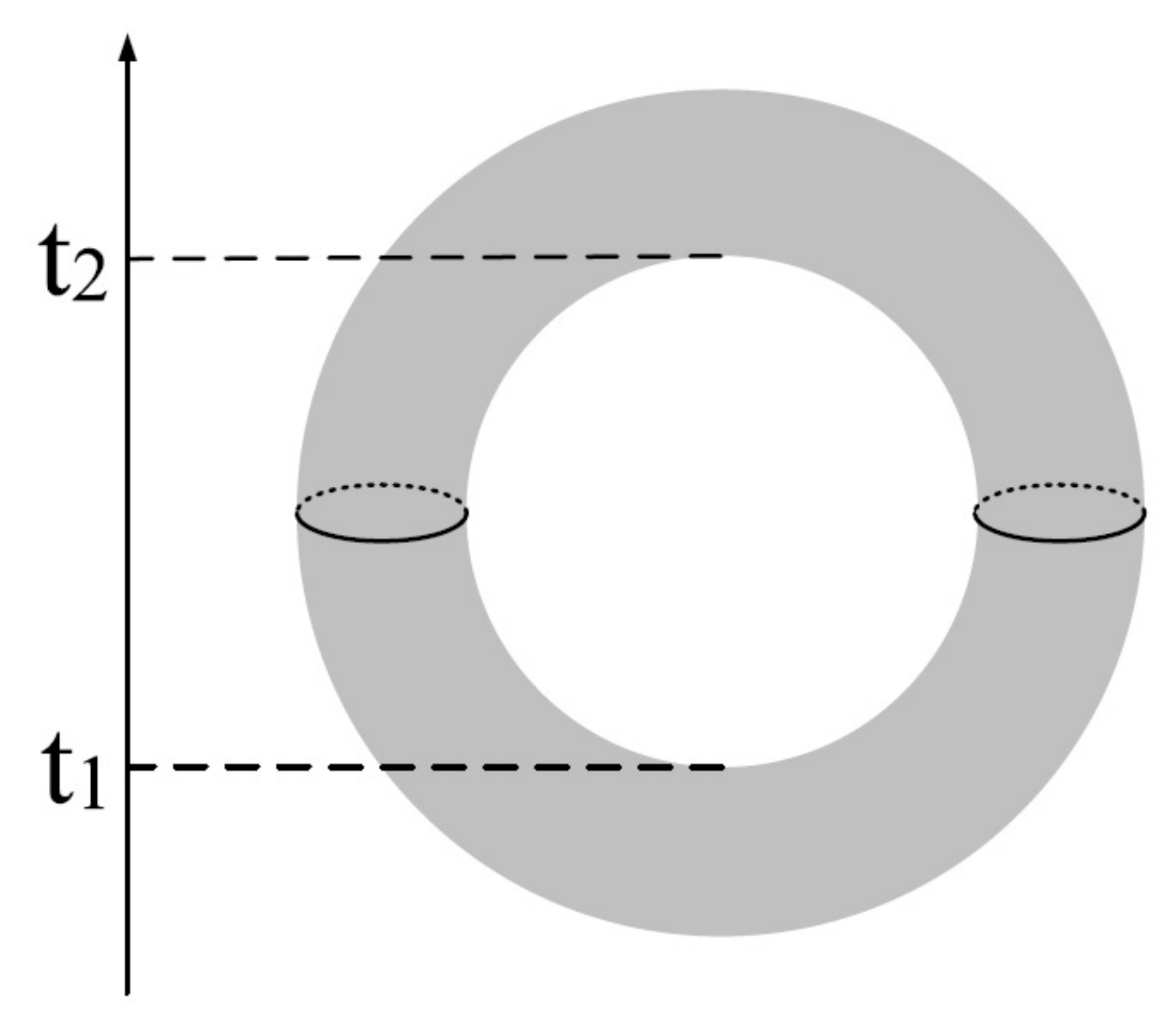}
    \caption{ Computing the extended persistent homology of a torus using the height function as the filter function. The (birth,death time) pairs of the two 1-dimensional homology classes are $(t_1,t_2)$ and $(t_2,t_1)$, respectively. The persistences are not consistent with our intuition of their sizes.}
    \label{fig:extPersist}
\end{figure}

\paragraph{Localization of Topological Features}
Zomorodian and Carlsson \cite{ZomorodianC07} take a different approach to solving the localization problem.  Their method starts with a topological space and a cover, a set of spaces whose union contains the original space. A blowup complex is built up which contains homology classes of all the spaces in the cover. The authors then use persistent homology to
identify homology classes in the blowup complex which correspond to a same homology class in the given topological space. The persistent homology algorithm produces a complete set of generators for the relevant homology group, which forms a basis for the group. However, both the quality of the generators and the complexity of the algorithm depend strongly on the choice of cover; there is, as yet, no suggestion of a canonical cover.

Using Dijkstra's shortest path algorithm, Erickson and Whittlesey \cite{EricksonW05} showed how to localize a one-dimensional homology class with its shortest cycle. Although not explicitly mentioned,
the length of this shortest cycle can be deemed as a measure of the size of its homology class. They proved, by an application of matroid theory, that finding $\beta$ linearly independent homology classes whose sizes have the smallest sum can be achieved by a greedy method, namely, finding the smallest  homology classes one by one,  subject to a linear independence constraint. Their algorithm takes $O(n^2\log n + n^2 \beta + n \beta^3)$ or $O(n^2\beta + n\beta^3)$ if $\beta$ is nearly linear in $n$.  The authors also show how the idea carries over to finding the optimal generators of the first fundamental group, though the proof is considerably harder in this case.  Note that this work is restricted to $1$-dimensional homology classes in a $2$-dimensional topological space.  A similar measure was used by Wood et al.~\cite{WoodHDS04} to remove topological noise of $2$-dimensional surface. This work also suffers from the dimension restriction.


\subsection{Our Contributions}

In this paper, we solve the three problems listed in Section \ref{sec:intro}, namely, 
measuring the size of homology classes, localizing classes, and choosing
a basis for a homology group. We define a size measure
for homology classes, based on relative homology, using geodesic distance.
This solves the first problem.
For the second problem, we localize homology classes with cycles
which are strongly related to the size measure just defined.
We solve the third problem by choosing the set of linearly independent homology 
classes whose sizes have the minimal sum. 
The time complexity of our algorithm is $O(\beta^4 n^3\log^2 n)$, where
$n$ is the cardinality of the given simplicial complex, and $\beta$ is 
the dimension of the homology group. We assume the input of our
algorithm is a simplicial complex $K$, i.e.~a triangulation of 
the given topological space.

\paragraph{ Size measure and localization.} In section 3, we define
the {\it size} of a homology class $h$, $S(h)$, as the radius of the 
smallest geodesic ball within the topological space which carries 
a cycle of $h$, $z_0\in h$. Here a {\it geodesic ball}, $B_p^r$,
is the subset of the topological space consisting of points whose geodesic
distance from the point $p$ is no greater than $r$.
The intuition behind this definition will be further elaborated in Section \ref{sec:sizeDef}.
Any cycle of $h$ lying within this smallest geodesic ball is 
a {\it localized cycle} of $h$.

\paragraph{ Optimal homology basis.}
Although there are $2^{\beta}-1$ nontrivial homology classes, only $\beta$ of them are 
needed to construct the homology group, subject to the constraint that 
these classes generate the group. 
We choose to compute the set whose sizes have the minimal sum, which we call the 
{\it optimal homology basis}. This basis contains as few large homology
classes as possible, and thus captures important features effectively.

\paragraph{ Computing the smallest class.}
To compute the smallest nontrivial homology class, we find the smallest geodesic ball, $B_{min}$, 
which carries any nonbounding cycle of the given simplicial complex $K$. 
To find $B_{min}$, we visit all of the vertices of $K$ in turn.
For each vertex $p$, we compute the persistent homology using the geodesic distance
from $p$ as a filter. This yields the smallest geodesic ball centered on $p$
carrying any nonbounding cycle of $K$, namely, $B_p^{r(p)}$. The ball with the
smallest $r(p)$ is exactly $B_{min}$.  Once we find $B_{min}$,
its radius, $r_{min}$, is the size of the smallest class. Any nonbounding cycle of $K$
carried by $B_{min}$ is a localized cycle of this class,
and can be computed by a reduction-style algorithm. 

\paragraph{ Computing the optimal homology basis.}
We use matroid theory to prove that the optimal homology basis can be computed
by a greedy method.  We first compute the smallest homology class of the given simplicial 
complex $K$, as described above. We then destroy this class
by sealing up one of its cycles with new simplices. Next, we compute the smallest 
homology class of the updated simplicial complex, $K'$, which is the second smallest
class of the optimal homology basis of $K$. We then destroy this class and proceed
to compute the third smallest class. The whole basis is computed in $\beta$ rounds.
Theorem \ref{thm:sealingUp} establishes that this sealing technique yields the optimal homology basis.
The time to compute the optimal homology basis is $O(\beta^4 n^4)$.

\paragraph{ An improvement using finite field linear algebra.} In computing the smallest geodesic ball 
$B_{min}$, we may avoid explicit
computation of $B_{p}^{r(p)}$ for every $p$. 
Instead, Theorem \ref{thm:neighborClose} suggests we visit all of the vertices in a breadth-first 
fashion. For the root of the breadth-first tree, we use the explicit algorithm; for the rest of the vertices, we need only check whether
a specific geodesic ball carries any nonbounding cycle of $K$.
This latter task is not straightforward, as some of the nonbounding cycles in this ball may be
boundaries in $K$. We use Theorem 
\ref{thm:rank} to reduce this problem to rank computations of sparse matrices 
over the $\mathbb{Z}_2$ field. 
The time to compute the optimal homology basis with this improvement is $O(\beta^4 n^3\log^2 n)$.

\paragraph{ Consistency with existing results.}
We prove in Section \ref{sec:lowDimResult} that our result is consistent with the low dimensional 
optimal result of Erickson and Whittlesey \cite{EricksonW05}.


\section{Preliminaries}
In this section, we briefly describe the background necessary for our work,
including a discussion of simplicial complexes, homology groups, 
persistent homology, and relative homology. Please
refer to \cite{Munkres84} for further details in algebraic topology,
and \cite{EdelsbrunnerLZ02,ZomorodianC05,Cohen-SteinerEH07,CarlssonZ07} for persistent
homology. For simplicity, we restrict our discussion to the combinatorial
framework of simplicial homology in the 
$\mathbb{Z}_2$ field.


\subsection{Simplicial Complex}

A $d${\it -dimensional simplex} or $d$-{\it simplex}, $\sigma$, 
is the convex hull of $d+1$ {\it affinely independent vertices},
which means for any of these vertices, $v_i$, the $d$ vectors
$v_j-v_i$, $j\neq i$, are linearly independent.
A $0$-simplex, $1$-simplex, $2$-simplex and $3$-simplex are
a vertex, edge, triangle and tetrahedron, respectively.
The convex hull of a nonempty subset of vertices of $\sigma$ is its {\it face}. 
A {\it simplicial complex} $K$ is a finite set of simplices that satisfies 
the following two conditions.
\begin{enumerate}
\item Any face of a simplex in $K$ is also in $K$.
\item The intersection of any two simplices in $K$ is either empty or is a 
face for both of them. 
\end{enumerate}
The {\it dimension} of a simplicial complex is the highest dimension of
its simplices. If a subset $K_0\subseteq K$ is a simplicial complex,
it is a {\it subcomplex} of $K$.


\subsection{Homology Groups}  

Within a given simplicial complex $K$, a $d${\it -chain} is a formal sum
$d$-simplices in $K$, $c=\sum_{\sigma\in K}a_{\sigma}\sigma$, $a_\sigma \in \mathbb{Z}_2$. All the
$d$-chains form the {\it group of $d$-chains}, $\mathsf{C}_d(K)$.
The {\it boundary} of a
$d$-chain is the sum of the $(d-1)$-faces of all the $d$-simplices in the
chain. The boundary operator $\partial_d:\mathsf{C}_d(K)\rightarrow\mathsf{C}_{d-1}(K)$ 
is a group homomorphism.

A $d${\it -cycle} is a $d$-chain without boundary. The set of $d$-cycles forms
a subgroup of the chain group, which is the kernel of the boundary operator, $\mathsf{Z}_d(K) = \ker(\partial_d)$.
A $d${\it -boundary} is the boundary of a $(d+1)$-chain. The set of $d$-boundaries
forms a group, which is the image of the boundary operator, $\mathsf{B}_d(K)=\img(\partial_{d+1})$.
It is not hard to see that a $d$-boundary is also a $d$-cycle. Therefore,
$\mathsf{B}_d(K)$ is a subgroup of $\mathsf{Z}_d(K)$. 
A $d$-cycle which is not a $d$-boundary, $z\in \mathsf{Z}_d(K)\backslash\mathsf{B}_d(K)$, 
is a {\it nonbounding cycle}. 

The $d${\it -dimensional homology group}
is defined as the quotient group $\mathsf{H}_d(K)=\mathsf{Z}_d(K)/\mathsf{B}_d(K)$.
An element in $\mathsf{H}_d(K)$ is a {\it homology class}, which is a coset of $\mathsf{B}_d(K)$,
$[z]=z+\mathsf{B}_d(K)$ for some $d$-cycle $z\in \mathsf{Z}_d(K)$.
If $z$ is a $d$-boundary, $[z]=\mathsf{B}_d(K)$ is the identity element of $\mathsf{H}_d(K)$.
Otherwise, when $z$ is a nonbounding cycle, $[z]$ is a {\it nontrivial homology class}
and $z$ is called a {\it representative cycle} of $[z]$.
Cycles in the same homology class are {\it homologous} to each other, which means
their difference is a boundary.

The dimension of the homology group, which is referred to as the
{\it Betti number}; $\beta_d=\dim(\mathsf{H}_d(K))=\dim(\mathsf{Z}_d(K))-\dim(\mathsf{B}_d(K))$.
It can be computed with a reduction algorithm based
on row and column operations of the boundary matrices \cite{Munkres84}.
Various reduction algorithms have been devised for different purposes \cite{KaczynskiMS98,EdelsbrunnerLZ02,ZomorodianC05}.

The following notation will prove convenient.
We say that a $d$-chain $c\in \mathsf{C}_d(K)$ is {\it carried by} 
a subcomplex $K_0$ when all the $d$-simplices of $c$ belong to $K_0$,
formally, $c\subseteq K_0$.
We denote $\vertex(K)$ as the set
of vertices of the simplicial complex $K$, $\vertex(c)$ as that of the chain $c$.

In this paper, we focus on the simplicial homology over the finite field $\mathbb{Z}_2$.
In this case, a chain corresponds to a $n_d$-dimensional vector, where
$n_d$ is the number of $d$-simplices in $K$. Computing the boundary of
a $d$-chain corresponds to multiplying the chain vector with a boundary matrix
$[b_1,...,b_{n_d}]$, whose column 
vectors are boundaries of $d$-simplices in $K$. By slightly abusing the notation,
we call the boundary matrix $\partial_d$.


\subsection{Persistent Homology}  
\label{sec:persistence}

Given a topological space $\mathbb{X}$ and a {\it filter function}
$f:\mathbb{X}\rightarrow \mathbb{R}$, {\it persistent homology} studies the homology
classes of the sublevel sets, $\mathbb{X}^t=f^{-1}(-\infty,t]$. 
A nontrivial homology class in $\mathbb{X}^{t_1}$ may become trivial
in $\mathbb{X}^{t_2}$, $t_1<t_2$, (formally, when induced by the inclusion homomorphism).
Persistent homology tries to capture this phenomenon by measuring the
times at which a homology class is born and dies. The persistence, or life time
of the class is the difference between its death and birth times.
Those with longer lives tell us something about the global structure
of the space $\mathbb{X}$, as described by the filter function. 
Note that the essential, that is, nontrivial homology 
classes of the given topological space $\mathbb{X}$ will never die. 

Edelsbrunner et al.~\cite{EdelsbrunnerLZ02} 
devised an $O(n^3)$ algorithm to compute the 
persistent homology. Its input are a simplicial complex $K$ and a filter function
$f$, which assigns each simplex in $K$ a real value.
Simplices of $K$ are sorted in ascending order according to their filter function
values. This order is actually the order in which simplices
enter the sublevel set $f^{-1}(-\infty,t]$ while $t$ increases.
For simplicity, in this paper we call this ordering the 
{\it simplex-ordering} of $K$ with regard to $f$. The output of the algorithm 
is the birth and death times of homology classes.

The algorithm performs column operations on an overall incidence matrix, 
$D$, whose rows and columns correspond to simplices in $K$. 
An entry $D(i,j)=1$ if and only if the simplex $\sigma_i$ belongs
to the boundary of the simplex $\sigma_j$. To some extent, $D$
is a big boundary matrix which can accommodate chains of arbitrary dimension.
Columns and rows of $D$ are sorted in ascending order
according to the function values of simplices. The algorithm performs the 
column reduction from left to right, recording $\low(i)$ as the lowest
nonzero entry of each column $i$. If column $i$ is reduced
to a zero column, $\low(i)$ does not exist. To reduce column $i$, we 
repeatedly find column $j$ satisfying $j<i$ and $\low(j)=\low(i)$; we then 
add column $j$ to column $i$, until column $i$ becomes a zero column
or we cannot find a qualified $j$ anymore.

The reduction of $D$ can be written as a matrix multiplication,
\begin{eqnarray}
R=DV,
\label{eqn:reduceD}
\end{eqnarray}
where $R$ is the reduced matrix and $V$ is an upper triangular matrix.
Columns of $V$ corresponding to zero columns of $R$ whose corresponding
simplices are $d$-dimensional form a basis of
the cycle group $\mathsf{Z}_d(K)$.

After the reduction, each paring, $\low(i)=j$, corresponds to a homology
class whose birth time is $f(\sigma_i)$ and death time is $f(\sigma_j)$.
A simplex $\sigma_i$ that is not paired, namely, neither $\low(i)=j$ nor
$\low(j)=i$ for any $j$, corresponds to an {\it essential homology class}, namely,
a nontrivial homology class of $K$. An essential homology class only has a birth
time, namely, $f(\sigma_i)$, and it never dies. Therefore, all the nontrivial
homology classes of $K$ have infinite persistences.


\subsection{Relative Homology} 

Given a simplicial complex $K$ and a subcomplex $K_0\subseteq K$, we may
wish to study the structure of $K$ by ignoring all the chains in $K_0$. 
We consider two $d$-chains, $c_1$ and $c_2$ to be the same if their 
difference is carried by $K_0$.
The objects we are interested in are then defined as these equivalence
classes, which form a quotient group, $\mathsf{C}_d(K,K_0)=\mathsf{C}_d(K)/\mathsf{C}_d(K_0)$.
We call it the {\it group of relative chains}, whose elements (cosets),
are called {\it relative chains}.

The boundary operator $\partial_d:\mathsf{C}_d(K)\rightarrow \mathsf{C}_{d-1}(K)$ 
induces a {\it relative boundary operator}, $\partial_d^{K_0}:\mathsf{C}_d(K,K_0)\rightarrow \mathsf{C}_{d-1}(K,K_0)$.
Analogous to
the way we define $\mathsf{Z}_d(K)$, $\mathsf{B}_d(K)$ and $\mathsf{H}_d(K)$ in $\mathsf{C}_d(K)$,
we define the {\it group of relative cycles}, the {\it group of relative boundaries} and
the {\it relative homology group} in $\mathsf{C}_d(K,K_0)$, denoted as $\mathsf{Z}_d(K,K_0)$,
$\mathsf{B}_d(K,K_0)$ and $\mathsf{H}_d(K,K_0)$, respectively.
An element in $\mathsf{Z}_d(K,K_0)\backslash\mathsf{B}_d(K,K_0)$ is a {\it nonbounding relative cycle}.

The following notation will prove convenient. We define a homomorphism 
$\phi_{K_0}:\mathsf{C}_d(K)\rightarrow \mathsf{C}_d(K,K_0)$ 
mapping $d$-chains to their corresponding relative chains, $\phi_{K_0}(c)=c+\mathsf{C}_d(K_0)$.
This homomorphism induces another homomorphism, 
$\phi_{K_0}^*:\mathsf{H}_d(K)\rightarrow \mathsf{H}_d(K,K_0)$,
mapping homology classes of $K$ to their corresponding
relative homology classes, $\phi_{K_0}^*(h)=\phi_{K_0}(z)+\mathsf{B}_d(K,K_0)$
for any $z\in h$.

Given a $d$-chain $c\in \mathsf{C}_d$, its corresponding relative chain $\phi_{K_0}(c)$
is a relative cycle if and only if $\partial_d(c)$ is carried by $K_0$. Furthermore,
it is a relative boundary if and only if there is a $(d+1)$-chain $c'\in \mathsf{C}_{d+1}(K)$
such that $c-\partial_{d+1}(c')$ is carried by $K_0$. 

These ideas are illustrated in Figure \ref{fig:relativeHom}.
Although $z_1$ and $z_2$ are both nonbounding cycles in $K$, $\phi_{K_0}(z_1)$ is a nonbounding
relative cycle whereas $\phi_{K_0}(z_2)$ is only a relative boundary. Although chains $c_1$ and $c_2$
are not cycles in $K$, $\phi_{K_0}(c_1)$ and $\phi_{K_0}(c_2)$ are relative cycles homologous to $\phi_{K_0}(z_1)$
and $\phi_{K_0}(z_2)$, respectively. 
\begin{figure}[hbtp]
    \centering
    \includegraphics[width=0.55\textwidth]{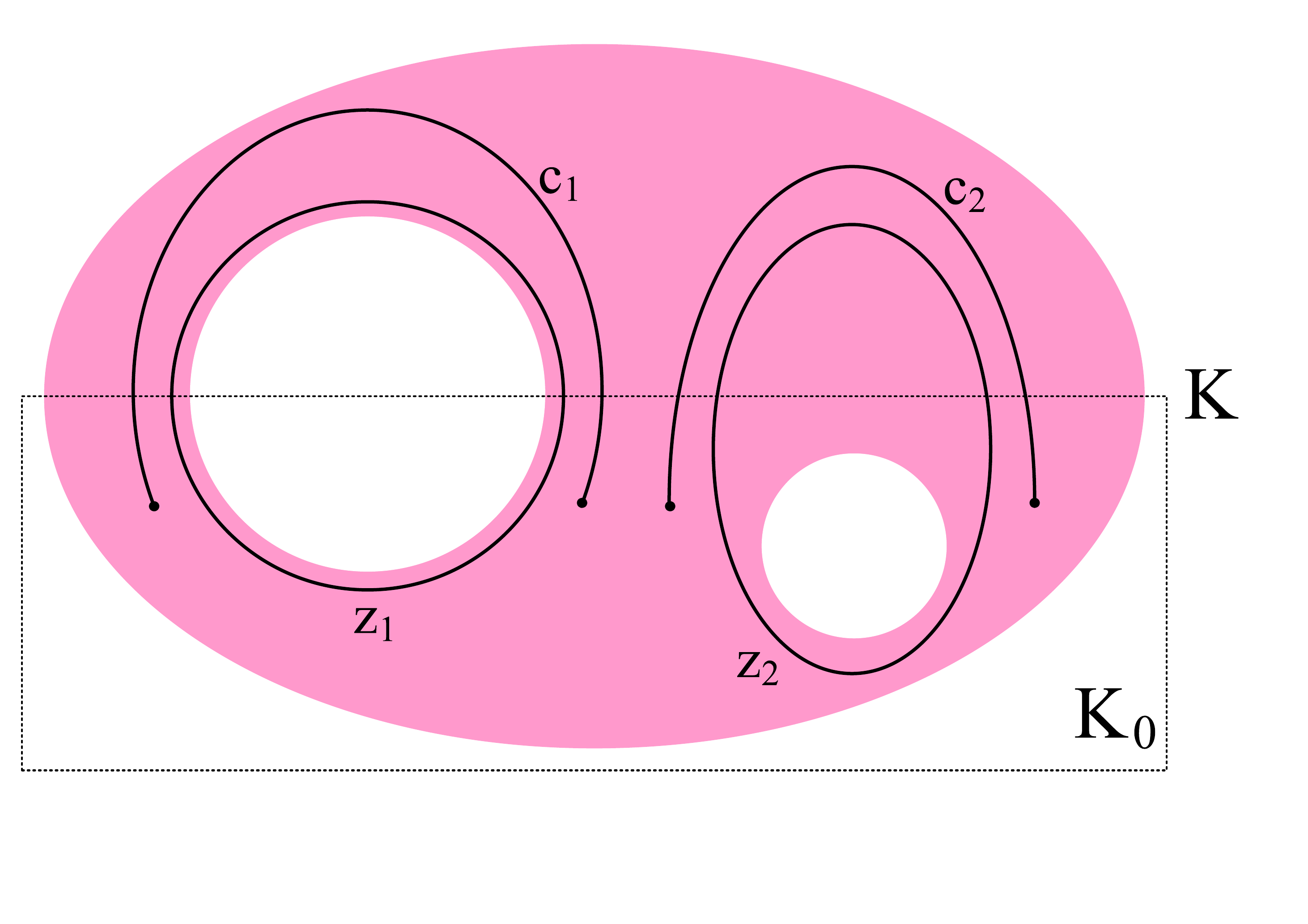}
    \caption{ A disk with two holes, whose triangulation is $K$. Simplices of $K$ lying completely
    in the dotted rectangle form a subcomplex $K_0$. The 1-dimensional relative homology group
    $\mathsf{H}_1(K,K_0)$ has dimension 1, although $\mathsf{H}_1(K)$ has dimension 2. The nontrivial class $[z_2]$ is carried by $K_0$.}
    \label{fig:relativeHom}
\end{figure}

Note that $[z_1]$ and $[z_2]$ are both nontrivial homology classes
in $K$. But their correspondences in the relative homology group may not necessarily be nontrivial.
We can see that $\phi_{K_0}^*([z_1])$ is a nontrivial relative homology class, whereas $\phi_{K_0}^*([z_2])$
is trivial. We say that the class $[z_2]$ is {\it carried by} $K_0$. This concept play an important role in our definition of the size measure. Further details will be given
in Section \ref{sec:sizeDef}.


\subsection{Rank Computations of Sparse Matrices over Finite Fields}

Wiedemann \cite{Wiedemann86} presented a randomized algorithm to capture
the rank of a sparse matrix over finite field. His method performs a binary search
for the rank. For an $m\times n$ sparse matrix $A$, the algorithm starts with
$s=\min(m,n)/2$. It tests if $s>\rank(A)$ or not, and then decides whether
$s=s/2$ or $s=3s/2$. For each $s$, $s\times m$ and $s\times n$ matrices
$P$ and $Q$ are radomly generated for several times. If $PAQ$ is singular
all the times, $s>\rank(A)$ with high probability. The expected time of the
algorithm is $O(n(\omega+n\log n)\log n)$, where $n$ is the maximal dimension of the matrix and
$\omega$ is the total number of nonzero entries in $A$. 


\section{Defining the Problem}
In this section, we provide a technique for ranking homology
classes according to their importance. Specifically, we solve the three problems 
mentioned in Section \ref{sec:intro} by providing 
\begin{itemize}
\item a meaningful size measure for homology classes that is computable
in arbitrary dimension;  
\item localized cycles which are consistent with the size measure of 
their homology classes;
\item and an optimal homology basis which distinguishes large classes from
small ones effectively.
\end{itemize}

\subsection{The Discrete Geodesic Distance}

In order to measure the size of homology classes, we need a notion of distance.  As we will deal with a simplicial complex $K$, it is most natural to introduce a discrete metric, and corresponding distance functions.  We define the {\it discrete geodesic distance} from a vertex $p\in \vertex(K)$, 
$f_p:\vertex(K) \rightarrow \mathbb{Z}$, as follows. For any vertex
$q\in \vertex(K)$, $f_p(q)=\dist(p,q)$ is the length of the shortest path connecting
$p$ and $q$, in the $1$-skeleton of $K$; it is assumed that each edge length is one, though this can easily be changed.  We may then extend this distance function from vertices to higher dimensional simplices naturally.  For any simplex $\sigma \in K$, $f_p(\sigma)$ is the maximal
function value of the vertices of $\sigma$, $f_p(\sigma)=
\max_{q\in \vertex(\sigma)}f_p(q)$. 
Finally, we define a geodesic ball $B_p^r$, $p\in \vertex(K)$, $r\ge 0$, 
as the subset of $K$, $B_p^r = \{ \sigma \in K \mid f_p(\sigma) \le r \}$. It is straightforward to show that these subsets are in fact subcomplexes.


\subsection{Measuring the Size of a Homology Class}
\label{sec:sizeDef}

Using notions from relative homology, we proceed to define the size
of a homology class as follows. Given a simplicial complex $ K $,
assume we are given a collection of subcomplexes 
$\mathcal{L}=\{ L \subseteq  K \}$. Furthermore, each of
these subcomplexes is endowed with a size. In this case, we define the size of 
a homology class $h$ as the size of the smallest $ L $
carrying $h$. Here we say a subcomplex $ L $
{\it carries} $h$ if $h$ has a trivial image in the 
relative homology group $\mathsf{H}_d( K , L )$, namely,
$\phi_L^*(h)=\mathsf{B}_d( K , L )$. In Figure \ref{fig:relativeHom}, the
class $[z_2]$ is carried by $K_0$, whereas $[z_1]$ is not.
\begin{definition}
The size of a class $h$, $S(h)$, is the size of the smallest measurable subcomplex carrying $h$,
formally,
\begin{equation*}
S(h)=\min_{ L \in \mathcal{L}}\size( L )\quad s.t.\quad \phi_L^*(h)=\mathsf{B}_d( K , L ).
\end{equation*}
\end{definition}

To facilitate computation, we prove the following theorem.
\begin{theorem}
The size of a homology class $h$, is the size of the smallest measurable subcomplex 
carrying one of its cycles, $z\in h$, formally,
\begin{equation*}
S(h)=\min_{ L \in \mathcal{L}}\size( L )\quad s.t.\quad \exists z\in h: z\subseteq  L ,
\end{equation*}
\label{thm:size}
\end{theorem}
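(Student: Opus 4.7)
The plan is to prove the two inequalities corresponding to the two optimization problems. Call the quantity on the right-hand side $S'(h)$. I want to show both $S'(h) \le S(h)$ and $S(h) \le S'(h)$, by showing that the feasible sets (the subcomplexes $L$ admitted by each constraint) actually coincide, i.e.\ that for a fixed $L \in \mathcal{L}$, the condition ``$L$ carries $h$'' in the sense of $\phi_L^*(h) = \mathsf{B}_d(K,L)$ is equivalent to the condition that some representative $z \in h$ is contained in $L$.

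First I would prove the easy direction: if there exists $z \in h$ with $z \subseteq L$, then $\phi_L(z) = 0 \in \mathsf{C}_d(K,L)$ because $z$ lies entirely in the subcomplex that is quotiented out. Consequently $\phi_L^*(h) = \phi_L(z) + \mathsf{B}_d(K,L) = \mathsf{B}_d(K,L)$, so $L$ carries $h$ in the sense of the definition. This immediately gives $S(h) \le S'(h)$, since any $L$ feasible for $S'$ is feasible for $S$.

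For the other direction, suppose $L$ carries $h$, and pick any representative $z \in h$. Then $\phi_L(z) \in \mathsf{B}_d(K,L)$, which, by the characterization of relative boundaries recalled in the preliminaries, means there exists $c' \in \mathsf{C}_{d+1}(K)$ such that $z + \partial_{d+1}(c')$ is carried by $L$. Set $z' = z + \partial_{d+1}(c')$. Then $z'$ is a $d$-cycle (sum of a cycle and a boundary), $z' \sim z$ in $K$ so $z' \in h$, and $z' \subseteq L$ by construction. This furnishes the representative needed to show $L$ is feasible for $S'$, giving $S'(h) \le S(h)$.

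I do not expect a genuine obstacle here; the only subtlety is to verify that the preliminaries actually supply the equivalence ``$\phi_L(c)$ is a relative boundary iff $c - \partial_{d+1}(c')$ is carried by $L$ for some $c'$'' (which they do, in the paragraph following the definition of $\phi_{K_0}^*$), and to note that working in $\mathbb{Z}_2$ makes signs irrelevant so that $z + \partial_{d+1}(c')$ is automatically homologous to $z$. Both observations are immediate from the setup.
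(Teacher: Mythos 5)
Your proposal is correct and follows essentially the same route as the paper: both directions reduce to the characterization of relative boundaries ($\phi_L(z)$ is a relative boundary iff $z-\partial_{d+1}(c')$ is carried by $L$ for some $c'$), which is exactly the fact the paper's one-paragraph proof invokes. You have simply spelled out the two inequalities more explicitly; no difference in substance.
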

\begin{proof}
As we know, for any cycle $z\in h$, the relative chain $\phi_L(z)$
is a relative boundary if and only if there is a $(d+1)$-chain 
$c'\in \mathsf{C}_{d+1}( K )$ such that $z-\partial_{d+1}(c')$
is carried by $ L $. This means that $h$ is carried by $ L $ if and only if
there exists some cycle $z\in h$ carried by $ L $.
\end{proof}

In this paper, we take $\mathcal{L}$ to be the set of discrete geodesic balls,
$\mathcal{L}=\{B_p^r\mid p\in  \vertex(K) , r\geq 0\}$. 
The size of a geodesic ball is naturally its radius $r$. Combining the size
definition and the theorem we have just proven, we define the size measure of 
homology classes as follows.
\begin{definition}
The {\it size} of a homology class is the radius of the smallest
geodesic ball carrying one of its cycles, formally,
\begin{equation*}
S(h)=\min r\quad s.t.\quad \exists p\in  \vertex(K) \quad {\rm and}\quad z\in h:z\subseteq B_p^r.
\end{equation*}
This smallest geodesic ball is denoted as $B_{min}(h)$ for convenience,
whose radius is $S(h)$.
\label{def:size}
\end{definition}
In Figure \ref{fig:badGenerator} (right), the three geodesic balls
centered at $p_1$, $p_2$ and $p_3$ are the smallest geodesic
balls carrying nontrivial homology classes $[z_1]$, $[z_2]$ and $[z_3]$, respectively. 
Their radii are the size of the three classes. In Figure \ref{fig:tube}, the smallest geodesic ball 
carrying a nontrivial homology class is the pink one centered at $p_2$
\footnote{This geodesic ball actually carries the shortest cycle of the class
using the definition of Erickson and Whittlesey \cite{EricksonW05}. We will
discuss this in Section \ref{sec:lowDimResult}.}
, not the one centered at $p_1$. Note that these geodesic ball may not look like 
Euclidean balls in the embedding space. 

\begin{figure}[hbtp]
    \centering
    \includegraphics[width=0.6\textwidth]{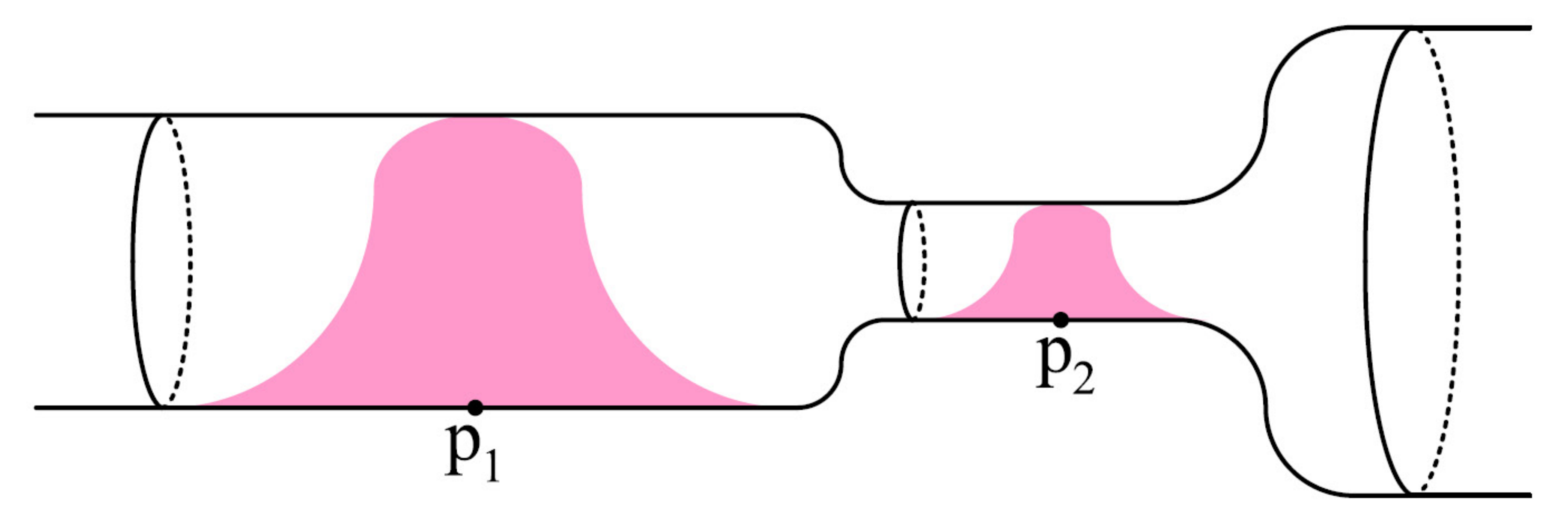}
    \caption{ On a tube, the smallest geodesic ball is centered at $p_2$, not $p_1$.}
    \label{fig:tube}
\end{figure}

\subsection{A Localized Cycle}
We would like to localize a homology class with a cycle which conveys 
its size. Define the {\it radius} of a cycle $z$ as,
\begin{eqnarray*}
\rad(z)=\min_{p\in \vertex( K) }\max_{q\in \vertex(z)} \dist(p,q),
\end{eqnarray*}
which is a natural extension of the canonical definition of radius, e.g.~of a Euclidean ball.
We define the {\it localized cycles} of a homology class $h$ as the one with the minimal
radius, namely, $z_0 = \argmin_{z\in h}\rad(z)$.

Based on Theorem \ref{thm:size}, it is not hard to see that the size of a class $h$ is 
equal to the minimal radius of its cycles, namely, $S(h) = \min_{z\in h}\rad(z)$, which
is exactly the radius of its localized cycles.
Thus, this definition of localized cycles agrees with our size measure for homology classes.

Given a homology class $h$, any of its cycles carried by $B_{min}(h)$ has the radius 
$S(h)$, and thus is localized.
In Figure \ref{fig:badGenerator}, $z_1$ and $z_2$ are localized cycles of 
$[z_1]$ and $[z_2]$ because they are carried by $B_{min}([z_1])$ and $B_{min}([z_1])$, 
respectively. 

\begin{remark}
Another quantity which can describe the size of a cycle is the {\it diameter}
\begin{eqnarray*}
\diam(z)=\max_{p,q\in \vertex(z)}\dist(p,q).
\end{eqnarray*}
We deliberately avoid this quantity because we conjecture computing
the cycle with the minimal diameter, $(\argmin_{z\in h}\diam(z))$,
is NP-complete. On the other hand, our definition of a localized cycle gives a 2-approximation
of the minimal diameter, formally,
\begin{eqnarray*}
\diam\left(\argmin_{z\in h}\rad(z)\right) \leq 2\min_{z\in h} \diam(z),
\end{eqnarray*}
which can be shown to be a tight bound.
\end{remark}


\subsection{The Optimal Homology Basis}

There are $2^{\beta_d}-1$ nontrivial homology classes. However, 
we only need $\beta_d$ of them to form a basis. The basis should
be chosen wisely so that we can easily distinguish important homology
classes from noise. See Figure \ref{fig:3circles} for an example.
There are $2^3-1=7$ nontrivial homology classes; we need three of
them to form a basis. We would prefer to choose $\{[z_1],[z_2],[z_3]\}$
as a basis, rather than $\{[z_1]+[z_2]+[z_3],[z_2]+[z_3],[z_3]\}$. 
The former indicates that there is one big cycle 
in the topological space, whereas the latter gives the impression of 
three large classes.

In keeping with this intuition, the {\it optimal homology basis}
is defined as follows.
\begin{definition}
The optimal homology basis is the basis for the homology group whose
elements' size have the minimal sum, formally,
\begin{equation*}
\mathcal{H}_d=\argmin_{\{h_1,...,h_{\beta_d}\}}\sum_{i=1}^{\beta_d} S(h_i),s.t.
\dim(\{h_1,...,h_{\beta_d}\})=\beta_d.
\end{equation*}
\label{def:optimalBasis}
\end{definition}
This definition guarantees that large homology classes appear
as few times as possible in the optimal homology basis. 
In Figure \ref{fig:3circles}, the optimal basis will be $\{ [z_1],  [z_2],  [z_3] \}$, which has only one large class.


\section{The Algorithm}
In this section, we introduce an algorithm to measure and localize the 
optimal homology basis as defined in
Definition \ref{def:optimalBasis}. We first introduce 
an algorithm to measure and localize the smallest 
homology class, namely, {\sf Measure-Smallest(K)}, which uses the 
persistent homology algorithm. Based on this procedure, 
we provide the algorithm {\sf Measure-All(K)}, which measures
and localizes the optimal 
homology basis. The algorithm takes $O(\beta_d^4n^4)$ time,
where $\beta_d$ is the Betti number and $n$ is the cardinality of
the input $K$. 

\subsection{Measuring and Localizing the Smallest Homology Class}
The procedure {\sf Measure-Smallest(K)} measures and localizes
the smallest nontrivial homology class, namely, the one with the smallest
size, 
\begin{eqnarray*}
h_{min}=\argmin_{h\in \mathsf{H}_d(K):h\neq \mathsf{B}_d(K)}S(h).
\end{eqnarray*}
The output of this procedure will be a pair $(S_{min},z_{min})$,
where $S_{min}=S(h_{min})$ and $z_{min}$ is a localized cycle
of $h_{min}$. According to the definitions, this pair is 
determined by the smallest geodesic ball carrying $h_{min}$,
namely, $B_{min}(h_{min})$. Once this ball is computed, its
radius is $S_{min}$, and a cycle of $h_{min}$
carried by this ball is $z_{min}$.

We first present an algorithm to compute the smallest
geodesic ball carrying $h_{min}$, i.e.~$B_{min}(h_{min})$. 
Second, we introduce the technique for finding $z_{min}$ from the computed ball.
The two corresponding procedures are {\sf Bmin} and {\sf Localized-Cycle}.
See Algorithm \ref{alg:smallestSlow} for
pseudocode of the procedure {\sf Measure-Smallest(K)}.
\begin{algorithm}[h!]
 \caption{ {\sf Measure-Smallest(K)}}
 \label{alg:smallestSlow}

 \begin{algorithmic}[1]
   \GOAL  measuring and localizing $h_{min}$.
 
    \INPUT $K$: the given simplicial complex.

    \OUTPUT $S_{min},z_{min}$:the size and a localized cycle of $h_{min}$.

		\STATE $(r_{min},p_{min})$ $=$ {\sf Bmin(K)}
		\STATE $S_{min}$ $=$ $r_{min}$
		\STATE $z_{min}$ $=$ {\sf Localized-Cycle($p_{min}$,$r_{min}$,$K$)}
 \end{algorithmic}
\end{algorithm}

\subsubsection{Computing $B_{min}(h_{min})$}
It is straightforward to see that $B_{min}(h_{min})$ is also
the smallest geodesic ball carrying any nontrivial homology class
of $K$. It can be computed by computing and comparing the smallest
geodesic balls centered at all vertices carrying nontrivial classes.
See Algorithm \ref{alg:Bmin} for the procedure.
\begin{algorithm}[h!]
 \caption{ {\sf Bmin(K)} }
 \label{alg:Bmin}

 \begin{algorithmic}[1]
   \GOAL computing $B_{min}(h_{min})$.

    \INPUT $K$: the given simplicial complex.

    \OUTPUT $p_{min}$, $r_{min}$:the center and radius of $B_{min}(h_{min})$.

		\STATE $r_{min}$ $=$ $+\infty$
    \FOR{$p\in \vertex(K)$}
			\STATE apply the persistent homology algorithm to $K$ with filter function $f_{p}$
			\STATE $r(p)=$birth time of the first essential homology class
			\IF{ $r(p)$ $<$ $r_{min}$ }
				\STATE $p_{min}$ $=$ $p$
				\STATE $r_{min}$ $=$ $r(p)$
			\ENDIF
    \ENDFOR
 \end{algorithmic}
\end{algorithm}

\begin{theorem}
Procedure {\sf Bmin(K)} computes $B_{min}(h_{min})$.
\end{theorem}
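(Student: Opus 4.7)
The plan is to break the claim into two parts: (a) for each fixed center $p$, the value $r(p)$ computed via persistent homology coincides with the radius of the smallest geodesic ball centered at $p$ that carries some nontrivial class of $K$; (b) minimising over $p$ then yields $B_{min}(h_{min})$ by Definition \ref{def:size}.

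For (a), I would start by observing that the sublevel sets of the filter $f_p$ are precisely the geodesic balls: by construction $\{\sigma\in K\mid f_p(\sigma)\le r\}=B_p^r$. Hence running the persistent homology algorithm of \cite{EdelsbrunnerLZ02} with filter $f_p$ computes the persistence of classes along the filtration $B_p^0\subseteq B_p^1\subseteq\cdots\subseteq K$. Recall that an essential class is one that never dies, i.e.\ one whose representative remains nontrivial in $H_d(K)$. I would then argue that the birth time of an essential class $h\in \mathsf{H}_d(K)\setminus\{\mathsf{B}_d(K)\}$ in this filtration is exactly the smallest $r$ such that $h$ lies in the image of the map $\mathsf{H}_d(B_p^r)\to\mathsf{H}_d(K)$ induced by inclusion, which, unwinding definitions, is the smallest $r$ such that some representative $z\in h$ satisfies $z\subseteq B_p^r$. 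By Theorem \ref{thm:size}, this is the same as $B_p^r$ carrying $h$. Taking the minimum over essential classes gives $r(p)=\min\{r\mid B_p^r\text{ carries some nontrivial class of }K\}$.

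For (b), Definition \ref{def:size} characterises $B_{min}(h_{min})$ as the geodesic ball of smallest radius, over all centers $p\in\vertex(K)$ and all radii $r\ge 0$, that carries some nontrivial class of $K$. Combining this with the formula for $r(p)$ from part (a), the radius of $B_{min}(h_{min})$ is $\min_{p\in \vertex(K)} r(p) = r_{min}$, and a center is given by any $p$ that attains this minimum, which is exactly the $p_{min}$ recorded by the loop in Algorithm \ref{alg:Bmin}. Thus $B_{min}(h_{min})=B_{p_{min}}^{r_{min}}$.

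The main obstacle I foresee is the careful bookkeeping in step (a): it must be checked both that each essential class is born no earlier than the first $r$ at which a representative is carried by $B_p^r$, and (crucially for the converse) that no nontrivial class of $K$ can be carried by $B_p^r$ for any $r<r(p)$. The latter would force such a class to be an essential class of the filtration with birth time strictly less than $r(p)$, contradicting the choice of $r(p)$ as the smallest essential birth time. Once this equivalence between ``essential birth time'' and ``first carrying radius'' is nailed down, the rest of the argument reduces to matching Definition \ref{def:size} with a minimum over vertices, which is routine.
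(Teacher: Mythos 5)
Your proposal is correct and follows essentially the same route as the paper's proof: identify the sublevel sets of $f_p$ with the geodesic balls $B_p^r$, observe that the nontrivial classes of $K$ are exactly the essential classes of the filtration so that the first essential birth time equals $r(p)$, and then minimize over all centers $p$. Your write-up is somewhat more careful than the paper's (which simply asserts the identification of $r(p)$ with the first essential birth time, whereas you flag the two directions of that equivalence and its reliance on Theorem \ref{thm:size}), but there is no substantive difference in approach.
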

\begin{proof}
For each vertex $p$, we compute the smallest geodesic ball centered
at $p$ carrying any nontrivial homology class, namely, $B_p^{r(p)}$.
We apply the persistent homology algorithm to $K$ with the filter
function $f_p$. Notice that a geodesic ball $B_p^r$ is the sublevel set
$f_p^{-1}(-\infty, r]\subseteq K$. Nontrivial homology classes of $K$ are
essential homology classes in the persistent homology algorithm. 
(For clarity, in the rest of this paper, we may use ``essential
homology classes'' and ``nontrivial homology classes of $K$'' interchangable.)
Therefore, the birth
time of the first essential homology class is $r(p)$, and the 
subcomplex $f_{p}^{-1}(-\infty,r(p)]$ is $B_{p}^{r(p)}$.

When all the $B_p^{r(p)}$'s are computed, we compare their radii
and pick the smallest one as $B_{min}(h_{min})$. 
\end{proof}

Once $B_{min}(h_{min})$ is computed, its radius
is the size of $h_{min}$. Any cycle of $h_{min}$
carried by $B_{min}(h_{min})$ is a localized cycle of $h_{min}$.
Next, we explain how to compute one such localized cycle.

\subsubsection{Computing a Localized Cycle of $h_{min}$}

The procedure {\sf Localized-Cycle($p_{min}$,$r_{min}$,$K$)} computes
a localized cycle of $h_{min}$. 
We assume that $B_{min}(h_{min})$, 
the smallest geodesic ball carrying the smallest homology class, carries
exactly one nontrivial homology class, (i.e. $h_{min}$ itself). 
\footnote{\label{foot:hminUnique} This assumption may not necessarily be true.
It is possible that $B_{min}(h_{min})$ carries two or more nontrivial classes.
Suppose $p_{min}$ is the center of $B_{min}(h_{min})$.
Then the proof can be easily modified to deal with this case, by fixing an order on 
simplices with the same function value $f_{p_{min}}$, and simulating this
order on $f_{p_{min}}$, i.e. treating $f_{p_{min}}(\sigma_1)< f_{p_{min}}(\sigma_2)$
if $\sigma_1$ comes before $\sigma_2$ (even though 
$f_{p_{min}}(\sigma_1)=f_{p_{min}}(\sigma_2)$).}
Any cycle carried by this ball which is nonbounding in $K$ is a 
cycle of $h_{min}$, and thus is a localized cycle of $h_{min}$. 
Therefore, we first compute a basis for all the cycles carried by
$B_{min}(h_{min})$. Second, we check elements in this basis one
by one until we find one which is nonbounding in $K$.
See Algorithm \ref{alg:Localized-Cycle} for the procedure.
Note that we use the algorithm of Wiedemann \cite{Wiedemann86} for rank computation,
because the related matrices are sparse.
\begin{algorithm}[h!]
 \caption{ {\sf Localized-Cycle($p_{min}$,$r_{min}$,$K$)} }
 \label{alg:Localized-Cycle}

 \begin{algorithmic}[1]
   \GOAL compute a localized cycle of $h_{min}$.

    \INPUT  $p_{min}$,$r_{min}$: the center and radius of $B_{min}(h_{min})$.\\
    				$K$: the given simplicial complex.

    \OUTPUT $z_{min}$: a localized cycle of $h_{min}$.

		\STATE $rank_0$ $=$ $\rank(\partial_{d+1})$
		\STATE construct $\partial_{d}'$ by picking columns of $\partial_{d}$ whose corresponding simplices belong to $B_{min}(h_{min})$
		\STATE reduce $\partial_{d}'$ and get $R$ and $V$
		\FOR{$z$ $=$ columns in $V$ corresponding to zero columns in $R$}
			\STATE $rank_1$ $=$ $\rank([z,\partial_{d+1}])$
			\IF{ $rank_1$ $\neq$ $rank_0$ }
				\STATE $z_{min}$ $=$ $z$
				\STATE break
			\ENDIF
    \ENDFOR
 \end{algorithmic}
\end{algorithm}

\begin{theorem}
The procedure {\sf Localized-Cycle($p_{min}$,$r_{min}$,$K$)} computes a localized cycle of $h_{min}$.
\end{theorem}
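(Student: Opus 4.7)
The plan is to verify the four operational claims on which the procedure rests, and then combine them.

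First, I would argue that the columns of $V$ corresponding to zero columns of $R$ form a basis of $\mathsf{Z}_d(B_{min}(h_{min}))$, i.e.\ of the $d$-cycles carried by $B_{min}(h_{min})$. This is exactly the content of the reduction identity \eqref{eqn:reduceD} applied to $\partial_d'$: since $V$ is upper triangular and non-singular, the columns of $V$ whose image under $\partial_d'$ is zero give a basis of $\ker(\partial_d')$, and by construction of $\partial_d'$ from the columns indexed by $d$-simplices in $B_{min}(h_{min})$, $\ker(\partial_d')$ coincides with $\mathsf{Z}_d(B_{min}(h_{min}))$.

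Second, I would verify the rank test. A cycle $z$ is a $d$-boundary in $K$ iff $z\in\mathsf{B}_d(K)=\columnspace(\partial_{d+1})$, and this happens iff $\rank([z,\partial_{d+1}])=\rank(\partial_{d+1})=rank_0$. Hence $rank_1\neq rank_0$ certifies that $z$ is a nonbounding cycle of $K$.

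Third, I would show that the loop necessarily finds such a $z$. By Theorem \ref{thm:size}, since $B_{min}(h_{min})$ carries $h_{min}$, there exists $z^\star\in h_{min}$ with $z^\star\subseteq B_{min}(h_{min})$; in particular $z^\star$ is nonbounding in $K$. Expand $z^\star$ in the basis of $\mathsf{Z}_d(B_{min}(h_{min}))$ produced in step one. If every basis element were a boundary in $K$, then every $\mathbb{Z}_2$-linear combination, including $z^\star$, would be a boundary, contradicting $[z^\star]=h_{min}\neq 0$. Thus some basis element $z$ passes the rank test.

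Finally, I would close the loop by showing $[z]=h_{min}$. Under the assumption of the procedure (relaxed as in footnote \ref{foot:hminUnique}), $h_{min}$ is the unique nontrivial class of $K$ carried by $B_{min}(h_{min})$; since the set of classes carried by a fixed subcomplex is a subgroup of $\mathsf{H}_d(K)$, it is the order-two subgroup $\{0,h_{min}\}$, forcing any nonbounding $z\subseteq B_{min}(h_{min})$ to satisfy $[z]=h_{min}$. Combined with the fact that $z\subseteq B_{min}(h_{min})$, Definition \ref{def:size} and the discussion of localized cycles in Section 3.3 imply $\rad(z)\le r_{min}=S(h_{min})$, so $z$ is localized.

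The main obstacle I anticipate is the third step: proving that at least one basis element passes the rank test. The clean way is the linear-combination argument above, and it is the one place where the choice of a \emph{basis} of $\mathsf{Z}_d(B_{min}(h_{min}))$ (rather than an arbitrary spanning set) is essential, both for correctness and to bound the number of rank computations by the nullity of $\partial_d'$.
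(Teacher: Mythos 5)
Your proposal is correct and follows essentially the same route as the paper: reduce the restricted boundary matrix $\partial_d'$ to obtain a basis of the cycles carried by $B_{min}(h_{min})$, then use the rank comparison against $\partial_{d+1}$ to detect a nonbounding element. You are in fact somewhat more careful than the paper's own proof, which leaves your third step (that some basis element must pass the rank test) and fourth step (that any nonbounding cycle carried by $B_{min}(h_{min})$ represents $h_{min}$, handled in the paper only in the discussion preceding the algorithm via the footnoted uniqueness assumption) implicit.
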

\begin{proof}
The cycles carried by $B_{min}(h_{min})$ form a vector space
\begin{eqnarray*}
\mathsf{Z}_d(K)\cap \mathsf{C}_d(B_{min}(h_{min})).
\end{eqnarray*}
We compute its basis by column reducing the boundary matrix restricted to
$B_{min}(h_{min})$. After the reduction, each zero column corresponds to an
element of the basis. More specifically, we compute the basis as follows.
We first construct a matrix $\partial_d'$ with columns of the boundary
matrix $\partial_d$ whose corresponding simplices belong to 
$B_{min}(h_{min})$. Next we perform a column reduction on this matrix from left
to right, like in the persistent homology algorithm. The reduction corresponds
to a matrix multiplication 
\begin{eqnarray*}
R=\partial_d'V,
\end{eqnarray*}
where $R$ is the reduced matrix and $V$ is an upper triangular matrix.
The columns in $V$ corresponding to zero columns in $R$ form the basis
of cycles carried by $B_{min}(h_{min})$.

Next, we check elements in this basis one by one to find one which is
nonbounding in $K$. An element of this basis, $z$, is nonbounding
in $K$ if and only if it cannot be expressed as a linear combination 
of boundaries of $K$. Since columns of the boundary matrix 
$\partial_{d+1}$ generate $\mathsf{B}_d(K)$, we just
need to compute the rank of the matrix $[z,\partial_{d+1}]$ and
compare it with the rank of $\partial_{d+1}$. The cycle $z$ is nonbounding in $K$
if and only if these two ranks are different.
\end{proof}

\subsection{The Optimal Homology Basis}

In this section, we present the algorithm for computing the optimal homology basis defined in Definition \ref{def:optimalBasis}, namely, $\mathcal{H}_d$. We first show that the optimal homology basis
can be computed in a greedy manner. Second, we introduce an efficient greedy algorithm.


\subsubsection{Computing $\mathcal{H}_d$ in a Greedy Manner}
Recall that the optimal homology basis is 
\begin{equation*}
\mathcal{H}_d=\argmin_{\{h_1,...,h_{\beta_d}\}}\sum_{i=1}^{\beta_d} S(h_i) \, s.t. \,
\dim(\{h_1,...,h_{\beta_d}\})=\beta_d.
\end{equation*}  
We use matroid theory \cite{CormenLRC01} to show that 
we can compute the optimal homology basis with a greedy method.
Let $H$ be the set of nontrivial $d$-dimensional homology classes (i.e. the homology group minus the trivial class). 
Let $L$ be the family of sets of linearly independent nontrivial homology classes.
Then we have the following theorem.  The same 
result has been mentioned in \cite{EricksonW05}.
\begin{theorem}
The pair $(H,L)$ is a matroid when $\beta_d>0$.
\label{thm:matroid}
\end{theorem}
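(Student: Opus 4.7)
The plan is to recognize that $(H,L)$ is essentially the standard vector (linear) matroid attached to the finite-dimensional $\mathbb{Z}_2$-vector space $\mathsf{H}_d(K)$, with only the zero element removed from the ground set. Since the zero element can never participate in a linearly independent family, removing it does not alter the independent-set structure, so the matroid axioms for $(H,L)$ reduce to well-known facts from linear algebra.

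Concretely, I would verify the three matroid axioms. First, $L$ is nonempty: the empty family is vacuously linearly independent, so $\emptyset \in L$; moreover, because $\beta_d > 0$, there exists at least one nontrivial class, so $H$ itself is nonempty and every singleton $\{h\}$ with $h \in H$ lies in $L$. Second, the hereditary property: if $A \in L$ and $B \subseteq A$, then any linear dependence among elements of $B$ would also be a dependence among elements of $A$, so $B \in L$. Third, the exchange property: given $A, B \in L$ with $|A| < |B|$, I need some $x \in B \setminus A$ with $A \cup \{x\} \in L$. Suppose for contradiction that every $x \in B \setminus A$ is a $\mathbb{Z}_2$-linear combination of elements of $A$. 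Elements of $B \cap A$ already lie in the span of $A$, so all of $B$ is contained in $\mathrm{span}(A)$, which has dimension at most $|A|$. But then the $|B|$ elements of $B$ lie in a subspace of dimension strictly less than $|B|$, contradicting the linear independence of $B$.

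The role of the hypothesis $\beta_d > 0$ is simply to ensure $H \neq \emptyset$, so that the matroid is nondegenerate; all three axioms hold in any case, but the matroid structure is only interesting when there is at least one element to choose. I do not anticipate any real obstacle here: the argument is the standard proof that linearly independent subsets of a vector space form a matroid, and the only point of care is to confirm that excluding the trivial class from the ground set is harmless, which is immediate because the trivial class (the zero vector of $\mathsf{H}_d(K)$) would in any case be excluded from every member of $L$.
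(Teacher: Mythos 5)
Your proof is correct and follows essentially the same route as the paper: verify nonemptiness (using $\beta_d>0$), heredity, and the exchange property via the same dimension-counting contradiction. Your extra remark that excluding the trivial class from the ground set is harmless is a nice point of care but does not change the argument.
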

\begin{proof}
We show $(H,L)$ is a matroid by proving the following properties.
\begin{enumerate}
\item The set $H$ is finite and nonempty as $\card(H)=2^{\beta_d}-1$.

\item For any set of linearly independent nontrivial homology classes, its subsets
are also linearly independent. Therefore, elements in $L$ are independent subsets of $H$,
and $L$ is hereditary.

\item For any two sets of linearly independent classes $l_1,l_2\in L$ such that
$\card(l_1)<\card(l_2)$, we can always find a homology class $h\in l_2\backslash l_1$ such
that $l_1\cup\{h\}$ is still linearly independent. Otherwise, any element in
$l_2$ is dependent on $l_1$. This means 
\begin{equation*}
\dim(l_2)\leq\dim(l_1)=\card(l_1)<\card(l_2),
\end{equation*}
which contradicts the linear independence of $l_2$. Therefore, $(H,L)$
satisfies the exchange property.
\end{enumerate}
\end{proof}

We construct a weighted matroid by assigning each nontrivial homology 
class its size as the weight. This weight function is strictly positive
because a nontrivial homology class can not be carried by a geodesic ball
with radius zero. According to matroid theory, we can compute the optimal homology basis
\begin{equation*}
\mathcal{H}_d=\argmin_{l\in L}\sum_{h\in l} S(h).
\end{equation*}
with a naive greedy method as follows. 
\begin{enumerate}
\item Sort elements in $H$ into
an order which is monotonically increasing according to size, namely, 
\begin{eqnarray*}
seq(H)&=(h_1,h_2,...,h_{(2^{\beta_d}-1)}),h_i\in H,\\
&\text{such that} \quad S(h_i)\leq S(h_j) \quad \forall i<j.
\end{eqnarray*}

\item Repeatedly pick the smallest class from $seq(H)$ that is linearly independent of those 
we have already picked, until no more elements are qualified.

\item The selected $\beta_d$ classes $\{h_{i_1},h_{i_2},...,h_{i_{\beta_d}}\}$ form the optimal
homology basis $\mathcal{H}_d$.  (Note that the $h$'s are ordered by size, i.e. $S(h_{i_k}) \le S(h_{i_{k+1}})$.)
\end{enumerate}

However, we cannot compute the exponentially long sequence $seq(H)$ (exponential in $\beta_d$) directly. 
Next, we present our greedy algorithm which is polynomial.


\subsubsection{Computing $\mathcal{H}_d$ with a Sealing Technique}

In this section, we introduce the algorithm for computing $\mathcal{H}_d$.
Instead of computing the exponentially long sequence $seq(H)$ directly, our
algorithm uses a sealing technique and takes time polynomial in $\beta_d$.

We start by measuring and localizing the smallest 
homology class of the given simplicial complex $K$, which is also the first
class we choose for $\mathcal{H}_d$. We destroy this class by 
sealing up one of its cycles -- i.e.~the localized cycle we 
computed -- with new simplices. Next, we measure and localize the
smallest homology class of the augmented simplicial complex $K'$. 
This class is the second smallest
homology class in $\mathcal{H}_d$. We destroy this class again and proceed
for the third smallest class in $\mathcal{H}_d$.  This process is repeated for $\beta_d$
rounds, yielding $\mathcal{H}_d$.

We destroy a homology class by sealing up the class's localized cycle, which we have computed.
To seal up this cycle $z$, we add (a) a new vertex $v$; (b) a $(d+1)$-simplex for each $d$-simplex of $z$, with vertex set equal to the vertex set of the $d$-simplex together with $v$; (c) all of the faces
of these new simplices. In Figure \ref{fig:mAll},
a $1$-cycle with four edges, $z_1$, is sealed up with one new vertex, 
four new triangles and four new edges.

We assign the new vertices $+\infty$ geodesic distance from 
any vertices with which they share an edge in the original complex $K$. Whenever we run
the persistent homology algorithm, all of the new simplices have $+\infty$ 
filter function values. Furthermore, in the procedure {\sf Measure-Smallest($K'$)}, we will not consider
any geodesic ball centered at these new vertices. In other words, the geodesic distance from these new 
vertices will never be used as a filter function.
Algorithm \ref{alg:measureAll} contains the pseudocode.
\begin{algorithm}[h!]
 \caption{ {\sf Measure-All($K$)} }
 \label{alg:measureAll}

 \begin{algorithmic}[1]
    \GOAL compute the optimal homology basis, $\mathcal{H}_d$.

    \INPUT $K$: the given simplicial complex.

    \OUTPUT $\mathcal{H}_d$: the optimal homology basis.

		\STATE $K'$ $=$ $K$
		\STATE $\mathcal{H}_d$ $=$ $\emptyset$
		\FOR{ i $=$ $1$ to $\beta_d$ }
				\STATE $h=(S,z)=${\sf Measure-Smallest($K'$)}
				\STATE $\mathcal{H}_d$ $=$ $\mathcal{H}_d$ $\cup$ $\{h\}$
				\STATE seal $z$ with new simplices, augment $K'$ accordingly
				\STATE $\forall \sigma\in K'\backslash K,p\in K,f_p(\sigma)=+\infty$
		\ENDFOR
 \end{algorithmic}
\end{algorithm}

Next, we prove that this algorithm does compute the optimal homology basis
$\mathcal{H}_d$. We will prove in Theorem \ref{thm:sealingUp} that {\sf Measure-All($K$)} produces
the same result as the naive greedy method presented in the previous
section. We begin by proving a lemma, based on the assumption in Footnote
\ref{foot:hminUnique} that $h_{min}$ is the only notrivial homology class
carried by $B_{min}(h_{min})$.
\begin{lemma}
Given a simplicial complex $K$, if we seal up its smallest homology class
$h_{min}(K)$, any other nontrivial homology class of $K$, $h$, is still
nontrivial in the augmented simplicial complex $K'$. In other words,
any cycle of $h$ is still nonbounding in $K'$.
\label{lem:sealingUp}
\end{lemma}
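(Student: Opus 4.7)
The plan is to argue by contradiction. Suppose $h \neq h_{min}(K)$ is a nontrivial class in $K$, and suppose some cycle $z\in h$ becomes bounding in $K'$. Write $z = \partial_{d+1}^{K'}(c')$ for some $(d+1)$-chain $c'\in \mathsf{C}_{d+1}(K')$. The idea is to split $c'$ into its contribution from old simplices in $K$ and its contribution from the newly added cone simplices, then show that the latter contribution is forced to have a very restricted form, which eventually pushes $h$ to equal either the trivial class or $h_{min}(K)$.

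First I would set up the sealing construction algebraically. If $z_{min}$ is the localized cycle of $h_{min}(K)$ that we chose to seal, and $v$ is the new apex vertex, then every new $(d+1)$-simplex has the form $C(\sigma) := v * \sigma$ for some $d$-simplex $\sigma$ of $z_{min}$. For a $d$-simplex $\sigma$ one has the cone formula $\partial C(\sigma) = \sigma + C(\partial\sigma)$ (over $\mathbb{Z}_2$). Extending linearly, any sum of new $(d+1)$-simplices can be written as $C(w)$ where $w$ is a subchain of $z_{min}$, and
\begin{equation*}
\partial_{d+1}^{K'}(C(w)) = w + C(\partial_d w).
\end{equation*}
I would decompose $c' = c'_{\text{old}} + C(w)$ with $c'_{\text{old}}\in \mathsf{C}_{d+1}(K)$ and $w$ a subchain of $z_{min}$, and then expand
\begin{equation*}
z = \partial_{d+1}^{K'}(c') = \partial_{d+1}(c'_{\text{old}}) + w + C(\partial_d w).
\end{equation*}

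Since $z$, $\partial_{d+1}(c'_{\text{old}})$, and $w$ all live in $K$ (they involve only old simplices), while $C(\partial_d w)$ involves the new apex $v$ unless $\partial_d w = 0$, the only way for this equation to hold is $\partial_d w = 0$. Thus $w$ is a $d$-cycle of $K$, and by construction $w\subseteq z_{min}\subseteq B_{min}(h_{min})$, so $w$ is a cycle carried by $B_{min}(h_{min})$. Now I would invoke the standing assumption from Footnote \ref{foot:hminUnique} that $B_{min}(h_{min})$ carries only one nontrivial class of $K$, namely $h_{min}(K)$: consequently either $w\in \mathsf{B}_d(K)$ or $[w] = h_{min}(K)$ in $\mathsf{H}_d(K)$.

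In the first case, $w = \partial_{d+1}(c'')$ for some $c''\in \mathsf{C}_{d+1}(K)$, and then $z = \partial_{d+1}(c'_{\text{old}} + c'')$ is a boundary in $K$, contradicting the nontriviality of $h$. In the second case, $z + w = \partial_{d+1}(c'_{\text{old}})$, so $[z] = [w] = h_{min}(K)$ in $\mathsf{H}_d(K)$, contradicting $h \neq h_{min}(K)$. Either way we reach a contradiction, completing the proof. The main obstacle is really just carefully justifying the splitting of $c'$ and the identification of the cone part as $C(w)$; once the cone formula $\partial C(\sigma) = \sigma + C(\partial\sigma)$ is in hand, everything else follows from the uniqueness hypothesis on $B_{min}(h_{min})$.
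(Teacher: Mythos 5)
Your proof is correct and follows essentially the same route as the paper's: split the bounding $(d+1)$-chain in $K'$ into its old and new parts, observe that the boundary of the new (cone) part restricted to $K$ is a cycle supported on $z_{min}$ and hence carried by $B_{min}(h_{min})$, and derive a contradiction with the uniqueness assumption of Footnote \ref{foot:hminUnique}. The only difference is cosmetic: your explicit cone formula $\partial C(w)=w+C(\partial_d w)$ makes rigorous the step the paper handles by inspecting which faces of the new simplices lie in $K$, and your two-case split ($w$ bounding vs.\ $[w]=h_{min}$) is a slightly finer version of the paper's single contradiction.
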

\begin{proof}
As we deal with two complexes $K$ and $K'$ with $K \subseteq K'$, we let $I:\mathsf{C}_d(K) \to \mathsf{C}_d(K')$ and $I^*:\mathsf{H}_d(K) \to \mathsf{H}_d(K')$ be the maps induced by inclusion.  Also, for a chain $c$, let $|c|$ be the simplicial complex composed of simplices from $c$ and their faces.

We proceed by contradiction.
Let $z_{min}\in h_{min}(K)$ be the localized cycle of $h_{min}(K)$ that we
seal up. For any nontrivial class $h\in \mathsf{H}_d(K)$, $h\neq h_{min}(K)$, 
suppose $I^*(h)$ is trivial.  We will show that there exists a cycle in $h$ which is carried by $B_{min}(h_{min})$, which contradicts the fact that $h_{min}$ is the only nontrivial class carried by $B_{min}(h_{min})$.

Suppose $I^*(h)$ is trivial. For any cycle $z\in h$, its corresponding 
$I(z)$ is the boundary of a $(d+1)$-chain in $K'$.  As $z$ is nonbounding in $K$, it must be the case that at least one of the simplices of this $(d+1)$-chain must be new.  That is 
\begin{eqnarray*}
I(z)=\partial_{d+1}\left(\sum_{\sigma\in K'\backslash K}a_{\sigma}\sigma+\sum_{\tau\in K}a_{\tau}\tau\right),
\end{eqnarray*}
where at least one $a_{\sigma}\neq 0$. But there exists a cycle $z'$ which is homologous
to $z$ in $K$, with $z'=z-\partial_{d+1}(\sum_{\tau\in K}a_{\tau}\tau)$, which yields, finally,
that $I(z')=\partial_{d+1}(\sum_{\sigma\in K'\backslash K}a_{\sigma}\sigma)$.  In other words, $I(z')$ is the boundary of a $(d+1)$-chain all of whose simplices are new.
Any simplex of $|I(z')|$ is a face of the new simplices and belongs to the original complex $K$,
and thus belongs to $|I(z_{min})|$.
It follows that $I(z')$ is carried by the simplicial complex corresponding to $I(z_{min})$,
$|I(z_{min})|$; and hence, $z'$ is carried by $|z_{min}|$.
Consequently, $z'$ and $h$ are carried by $B_{min}(h_{min})$, which leads to the desired contradiction.
\end{proof}

\begin{theorem}
The procedure {\sf Measure-All($K$)} computes $\mathcal{H}_d$.
\label{thm:sealingUp}
\end{theorem}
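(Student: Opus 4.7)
The plan is to induct on the iterations of {\sf Measure-All}($K$) and show that, at the $i$-th step, the class $h_i$ produced has the same size as the $i$-th class chosen by the naive greedy procedure from the previous subsection. Combined with Theorem \ref{thm:matroid} (by which greedy finds a minimum-weight basis of the matroid $(H,L)$), this will establish that {\sf Measure-All}($K$) outputs an optimal homology basis $\mathcal{H}_d$.

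Write $K_1 = K$ and, for $i \ge 1$, let $K_{i+1}$ denote the complex obtained from $K_i$ by sealing the localized cycle of $h_i = h_{min}(K_i)$. The first ingredient is an inductive extension of Lemma \ref{lem:sealingUp}: after $i$ sealings, the kernel of the inclusion-induced map $I_i^* : \mathsf{H}_d(K) \to \mathsf{H}_d(K_{i+1})$ is exactly $\langle h_1, \ldots, h_i \rangle$. The base case ($i=1$) is Lemma \ref{lem:sealingUp} itself. For the inductive step, apply Lemma \ref{lem:sealingUp} to the complex $K_i$ with smallest class $h_i$; this kills the image of $h_i$ in $\mathsf{H}_d(K_{i+1})$ while preserving nontriviality of all other classes of $K_i$, and combining with the inductive hypothesis on $\ker(I_{i-1}^*)$ yields $\ker(I_i^*) = \langle h_1, \ldots, h_i \rangle$. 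Consequently $\mathsf{H}_d(K_{i+1}) \cong \mathsf{H}_d(K) / \langle h_1, \ldots, h_i \rangle$, and the nontrivial classes of $K_{i+1}$ correspond to the nonzero cosets of $\langle h_1, \ldots, h_i \rangle$ in $\mathsf{H}_d(K)$.

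The second ingredient is a ``size correspondence''. Because the augmentation assigns filter value $+\infty$ to every new simplex and {\sf Measure-Smallest}($K_{i+1}$) never centers a geodesic ball at a new vertex, any finite-radius geodesic ball used by {\sf Measure-Smallest}($K_{i+1}$) consists solely of simplices of $K$ and coincides with a geodesic ball of $K$. Hence any cycle $z$ carried by such a ball is a $d$-cycle of $K$ whose class in $K_{i+1}$ corresponds to the coset $[z]_K + \langle h_1, \ldots, h_i \rangle$, giving
\begin{equation*}
S_{K_{i+1}}(h') \;=\; \min_{g \in C} S(g)
\end{equation*}
for any nontrivial $h' \in \mathsf{H}_d(K_{i+1})$ corresponding to coset $C$. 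Minimizing over all such $h'$ collapses this double minimum to $S(h_{i+1}) = \min\{S(h) : h \in H \setminus \langle h_1, \ldots, h_i \rangle\}$, which is precisely the size of the $(i{+}1)$-st class chosen by the naive greedy algorithm. Moreover the representative lifted from the computed localized cycle lies outside $\langle h_1, \ldots, h_i \rangle$, so $\{h_1, \ldots, h_{i+1}\}$ remains linearly independent. After $\beta_d$ iterations the resulting basis has size-sum equal to the greedy optimum, which by Theorem \ref{thm:matroid} equals $\mathcal{H}_d$.

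The main obstacle is the inductive extension of Lemma \ref{lem:sealingUp}, where one must ensure that the uniqueness hypothesis of Footnote \ref{foot:hminUnique} (that $B_{min}$ carries exactly one nontrivial class) continues to hold in each $K_i$. Since finite-radius balls in $K_i$ are identical to those in $K$, this uniqueness transfers directly from the corresponding ball in $K$, and any ties can be broken exactly as in the footnote by fixing a simplex order at equal filter values. A smaller point is to verify that the localized cycle returned at iteration $i$ really lies in $K$ rather than in $K_i \setminus K$; this is immediate because the cycle is carried by a finite-radius ball containing no new simplices.
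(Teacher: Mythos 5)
Your proposal is correct and follows essentially the same route as the paper's proof: both reduce correctness to agreement with the naive greedy method, use Lemma \ref{lem:sealingUp} to show that classes independent of the already-sealed ones survive (while dependent ones die), and observe that finite-radius geodesic balls are unaffected by the new simplices, so sizes and localized cycles transfer between $K$ and the augmented complexes. Your version is somewhat more explicit than the paper's — in particular the characterization $\ker(I_i^*)=\langle h_1,\ldots,h_i\rangle$ and the remark that the uniqueness assumption of Footnote \ref{foot:hminUnique} must be re-verified in each $K_i$ — but these are refinements of, not departures from, the published argument.
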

\begin{proof}
We prove the theorem by showing that the sealing up technique produces 
the same result as the naive greedy algorithm, namely,
$\mathcal{H}_d=\{h_{i_1},h_{i_2},...,h_{\beta_d}\}$. We show that for any $l\leq \beta_d$, after computing
and sealing up the first $l-1$ classes of $\mathcal{H}_d$, i.e. $\{h_{i_1},...,h_{i_{l-1}}\}$, the next class
we choose is exactly $h_{i_l}$.  In other words, the localized
cycle and size of the smallest class of the augmented simplicial complex
$K^{l-1}$ are equal to that of $h_{i_{l}}$. 

First, any class between $h_{i_{l-1}}$ and $h_{i_l}$ in $seq(H)$ will not be
chosen. Any such class $h_j$ is linearly dependent on classes that have already
been chosen, namely, $\{h_{i_1},...,h_{i_{l-1}}\}$. Since these classes
have been sealed up, a cycle of $h_j$ is a boundary in $K^{l-1}$. Thus, $h_j$
cannot be chosen.

Second, Lemma \ref{lem:sealingUp} leads to the fact that for any class in $seq(H)$
that is not linearly dependent on $\{h_{i_1},...,h_{i_{l-1}}\}$, it is nontrivial in $K^{l-1}$.

Third, the smallest class of $K^{l-1}$, $h_{min}(K^{l-1})$, corresponds to $h_{i_{l}}$: any new
simplex belonging to $K^{l-1}\backslash K$ will not change the computation of the geodesic balls $B_p^r$
with finite radius $r$, and thus will change neither the size measurement nor the localization.  Thus, the $h_{min}(K^{l-1})$ computed by the sealing technique is identical to $h_{i_l}$ computed by the naive greedy method,
in terms of the size and the localized cycle.
\end{proof}
The algorithm is illustrated in Figure \ref{fig:mAll}. The rectangle, $z_1$, and the octagon, $z_2$,
are the localized cycles of the smallest and the second smallest homology 
classes ($S([z_1])=2$,$S([z_2])=4$).
The nonbounding cycle $z_3=z_1+z_2$ corresponds to the largest nontrivial 
homology class $[z_3]=[z_1]+[z_2]$ ($S([z_3])=5$). After the first round, 
we choose $[z_1]$ as the smallest class in $\mathcal{H}_1$.
Next, we destroy $[z_1]$ by sealing up $z_1$, which yields the augmented 
complex $K'$. This time, we choose $[z_2]$, giving $\mathcal{H}_1=\{[z_1],[z_2]\}$.
\begin{figure}[hbtp]
    \centerline{
    \begin{tabular}{cc}
		\includegraphics[width=0.42\textwidth]{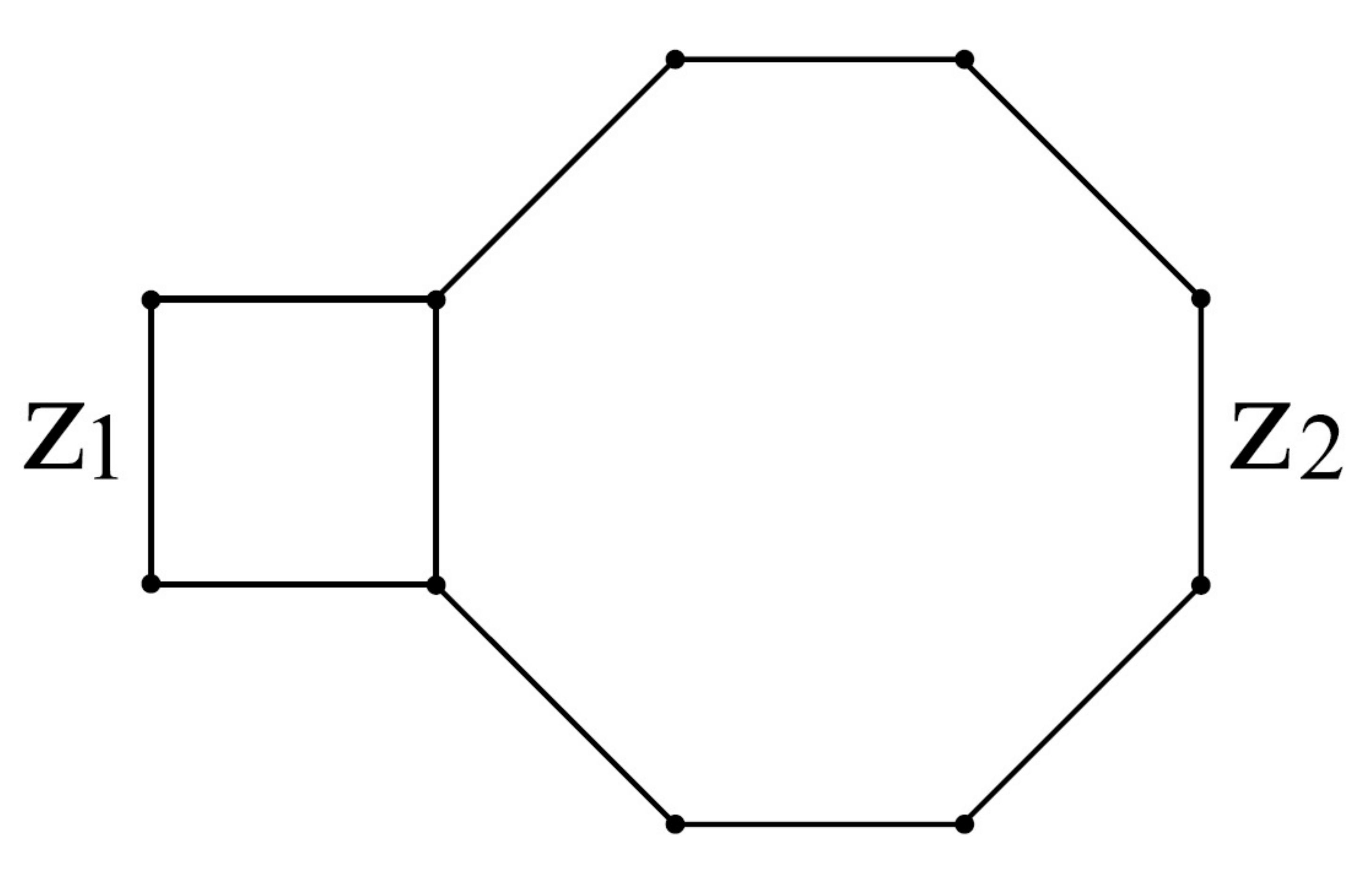} &
		\includegraphics[width=0.42\textwidth]{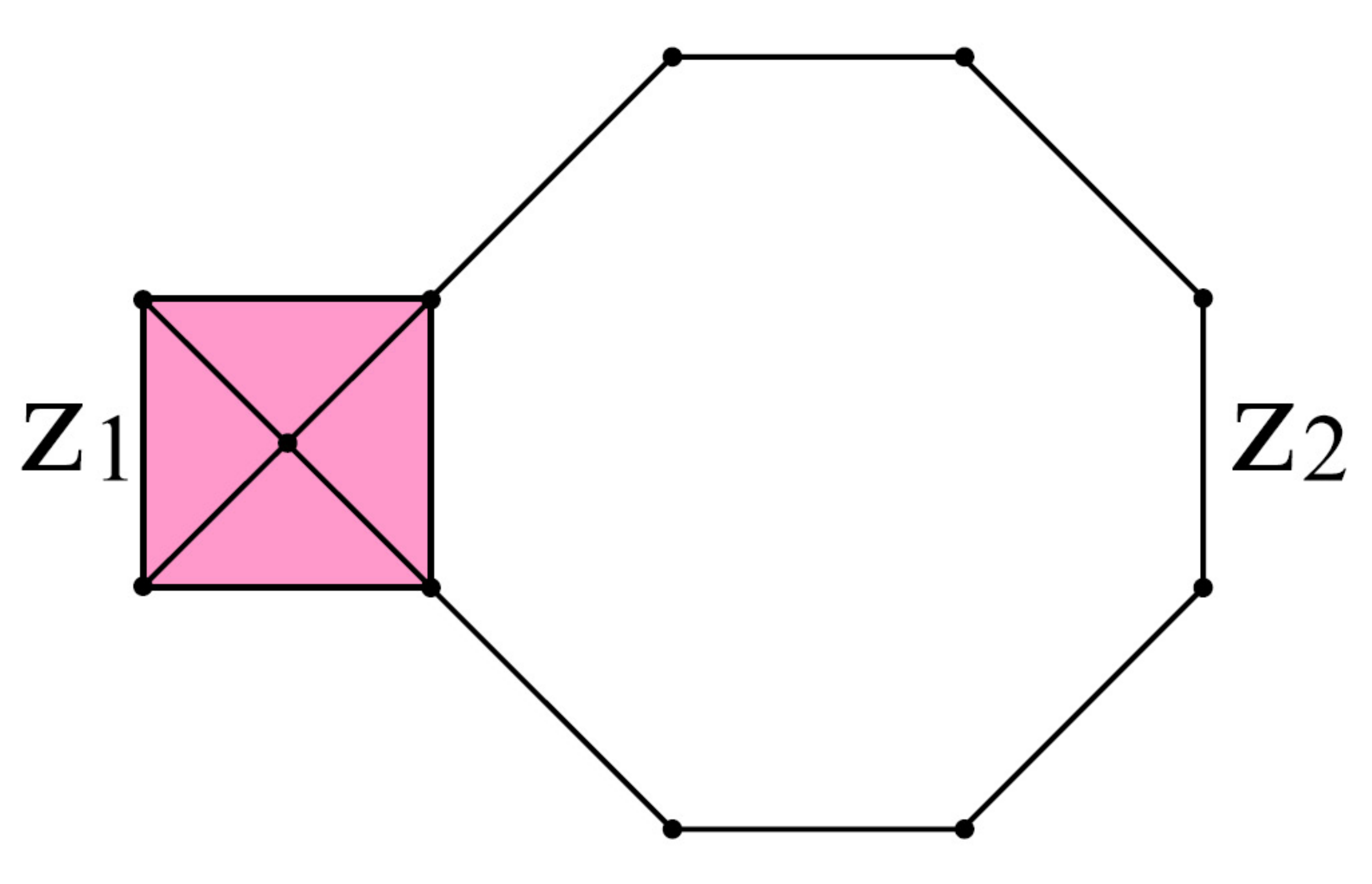}
    \end{tabular}}
    \caption{ Left: the original complex $K$. Right: the augmented complex $K'$ after sealing up the smallest class, $[z_1]$. }
    \label{fig:mAll}
\end{figure}

\subsection{Complexity}
We analyze the complexity of the non-refined algorithm.
Denote $n$ and $m$ as the upper bounds of the total numbers of simplices
of the original complex $K$ and the intermediate complex $K'$, respectively.
The algorithm runs the procedure {\sf Measure-Smallest} $\beta_d$ times with the input
$K'$, and thus runs the procedures {\sf Bmin} and {\sf Localized-Cycle} $\beta_d$ times with
the input $K'$. 

The procedure {\sf Bmin} runs the persistent homology algorithm
on the intermediate complex, $K'$, using filter function $f_p$ for each vertex of the original
complex, $K$. Therefore, each time {\sf Bmin} is called, it takes $O(nm^3)$ time.

The procedure {\sf Localized-Cycle} runs the persistent homology algorithm once, and
Wiedemann's rank computation algorithm $O(m)$ times. The matrices used for rank computations
are $[z,\partial_{d+1}]$ which have $O(m)$ nonzero entries. Therefore, 
each time {\sf Localized-Cycle} is called, it takes $O(m^3\log^2m)$ time.

In total the whole algorithm takes $O(\beta_d(nm^3+m^3\log^2m))=O(\beta_dnm^3)$ time.
Next, we bound $m$, the size of the intermediate simplicial complex $K'$.
During the algorithm, we seal up $\beta_d$ nonbounding cycles. 
For each sealing, the number of newly added simplices is bounded by the number of simplices of the 
sealed cycle. As we have shown, each cycle we seal up only contains simplices in the 
original complex $K$. Therefore, the number of new simplices used to seal up each cycle
is $O(n)$. The size of the intermediate simplicial complex, $K'$,
is $O(\beta_dn)$ throughout the whole algorithm.

Finally, substitute $\beta_dn$ for $m$. We conclude that the algorithm
takes $O(\beta_dnm^3)=O(\beta_dn(\beta_dn)^3)=O(\beta_d^4n^4)$ time.


\section{An Improvement Using Finite Field Linear Algebra}
In this section, we present an improvement on the algorithm
presented in the previous section, more specifically, an improvement
on the procedure {\sf Bmin($K$)}. The idea is based on
the finite field linear algebra behind the homology.

We first observe that for neighboring vertices, $p_1$ and $p_2$,
the persistence diagrams using $f_{p_1}$ and $f_{p_2}$ as filter
functions are close. In Theorem \ref{thm:neighborClose}, we prove
that the birth times of the first essential homology classes using
$f_{p_1}$ and $f_{p_2}$ differ by no more than $1$. This observation
suggests that for each $p$, instead of computing $B_{p}^{r(p)}$
we may just test whether a certain geodesic ball carries any essential
homology class.
Second, with some algebraic insight, we reduce the problem
of testing whether a geodesic ball carries any essential homology class
to the problem of comparing dimensions of two vector spaces.
Furthermore, we use Theorem \ref{thm:rank} to reduce the problem to
rank computations of sparse matrices on the $\mathbb{Z}_2$ field,
for which we have ready tools (of Wiedemann \cite{Wiedemann86}).  

In doing so, we improve the complexity of computing the optimal homology
basis to $O(\beta_d^4 n^3\log^2)$.
\begin{remark} 
This complexity is close to that of the persistent homology algorithm, whose
complexity is $O(n^3)$. Given the nature of the problem, it seems likely 
that the persistence complexity is a lower bound. If this is the case, the current
algorithm is nearly optimal.
\end{remark}

\begin{remark} 
Cohen-Steiner et al.~\cite{Cohen-SteinerEM06}
provided a linear algorithm to maintain the persistent diagram while changing
the filter function. However, this algorithm is not directly applicable in our context.
The reason is that it takes $O(n)$ time to update
the persistent diagram for a transposition in the simplex-ordering.
In our case, even for filter functions of two neighboring 
vertices, it may take $O(n^2)$ transpositions to transform one simplex-ordering
into the other.
Therefore, updating the persistent diagram while changing the filter function
takes $O(n^2)\times O(n)=O(n^3)$ time. This is the same amount of time it
would take to compute the persistent diagram from scratch.
\end{remark}

In this section, we assume that $K$ has a single component; 
multiple components can be accommodated with a simple modification.
For convenience, we use ``carrying nonbounding cycles'' and
``carrying essential homology classes'' interchangeably, because a geodesic ball
carries essential homology classes of $K$ if and only if it carries nonbounding
cycles of $K$.


\subsection{The Stability of Persistence Leads to An Improvement}
Cohen-Steiner et al.~\cite{Cohen-SteinerEH07} proved that the change, suitably defined, of the persistence
of homology classes is bounded by the changes of the filter functions.
Since the filter functions of two neighboring vertices, $f_{p_1}$ and $f_{p_2}$,
are close to each other, the birth times of the first nonbounding cycles in
both filters are close as well. This leads to Theorem \ref{thm:neighborClose}.
\begin{theorem}
If two vertices $p_1$ and $p_2$ are neighbors, the birth times of the first
nonbounding cycles for filter functions $f_{p_1}$ and 
$f_{p_2}$ differ by no more than 1.
\label{thm:neighborClose}
\end{theorem}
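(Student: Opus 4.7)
The plan is to translate the theorem into a simple containment statement about geodesic balls, so that the stability of $r(p)$ (the birth time of the first essential class) follows directly from stability of the filter $f_p$ itself. Because the $1$-skeleton of $K$ has unit edge lengths, if $p_1$ and $p_2$ are neighbors then for every vertex $q$ the triangle inequality gives $|f_{p_1}(q) - f_{p_2}(q)| \le 1$. Extending to simplices by $f_p(\sigma) = \max_{q \in \vertex(\sigma)} f_p(q)$ preserves this inequality, so $|f_{p_1}(\sigma) - f_{p_2}(\sigma)| \le 1$ for all $\sigma \in K$. This is the only metric fact I need.

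From this I would deduce a direct containment of geodesic balls: for every $r \ge 0$,
\begin{equation*}
B_{p_1}^{r} \subseteq B_{p_2}^{r+1},
\end{equation*}
since any simplex $\sigma$ with $f_{p_1}(\sigma) \le r$ satisfies $f_{p_2}(\sigma) \le f_{p_1}(\sigma) + 1 \le r+1$. Now specialize to $r = r(p_1)$. By definition $B_{p_1}^{r(p_1)}$ carries some nonbounding cycle $z$ of $K$; the containment above means all simplices of $z$ lie in $B_{p_2}^{r(p_1)+1}$, so this larger ball also carries a nonbounding cycle of $K$. Since $r(p_2)$ is the smallest radius of a ball centered at $p_2$ carrying a nonbounding cycle, we get $r(p_2) \le r(p_1) + 1$. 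Swapping the roles of $p_1$ and $p_2$ gives the reverse bound, and hence $|r(p_1) - r(p_2)| \le 1$.

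There is no real obstacle here; the only thing to be careful about is the distinction between a cycle being nonbounding in $K$ versus being nonbounding in the subcomplex carrying it. My argument only uses the former notion (which is an invariant of the cycle as an element of $\mathsf{Z}_d(K) \setminus \mathsf{B}_d(K)$), so enlarging the carrier ball from $B_{p_1}^{r(p_1)}$ to $B_{p_2}^{r(p_1)+1}$ preserves the property of ``carrying a nonbounding cycle of $K$'' that defines $r(\cdot)$. If one prefers, the statement can alternatively be obtained as a special case of the Cohen-Steiner--Edelsbrunner--Harer stability theorem applied to $\|f_{p_1} - f_{p_2}\|_\infty \le 1$, which bounds the bottleneck distance between the corresponding persistence diagrams by $1$ and therefore shifts the birth coordinates of the essential classes by at most $1$; but the elementary containment argument above seems cleaner and self-contained.
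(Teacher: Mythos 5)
Your proof is correct and follows essentially the same route as the paper: both arguments rest on the bound $\|f_{p_1}-f_{p_2}\|_\infty \le 1$ from the triangle inequality and then transfer the first nonbounding cycle from one filtration to the other. The only difference is presentational---the paper phrases the second step as a proof by contradiction on birth times, while your direct ball-containment formulation $B_{p_1}^{r} \subseteq B_{p_2}^{r+1}$ is the contrapositive of the same fact and is arguably cleaner.
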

\begin{proof}
We first prove that the filter functions are close for two neighboring 
vertices $p_1$ and $p_2$, formally, 
\begin{equation}
|f_{p_1}-f_{p_2}|_{\infty}\leq 1. 
\label{eqn:filterClose}
\end{equation}
For any vertex $q$, we can connect $q$ and $p_2$ by concatenating the edge 
$(p_1,p_2)$ to the shortest path connecting $q$ and $p_1$. Therefore the
geodesic distance between $q$ and $p_2$ is no greater than one plus the
geodesic distance between $q$ and $p_1$, formally,
\begin{eqnarray*} 
f_{p_2}(q)\leq 1+f_{p_1}(q).
\end{eqnarray*} 
It is trivial to see that we can switch $p_1$ and $p_2$ in this equation. 
Therefore, we have
\begin{eqnarray*}
|f_{p_1}(q)-f_{p_2}(q)|\leq 1.
\end{eqnarray*}
It is not hard to extend this equation from any vertex $q\in \vertex(K)$ to any simplex 
$\sigma\in K$. Therefore, Equation (\ref{eqn:filterClose}) is proven.

Next, we show that the birth times of the first nonbounding cycles in
the two filter functions are close, formally, 
\begin{equation}
|f_{p_1}(z')-f_{p_2}(z'')|\leq 1,
\label{eqn:firstUBCycleClose}
\end{equation} 
where $z'$ and $z''$ are the first
nonbounding cycles in the filters $f_{p_1}$ and $f_{p_2}$, respectively.
Here by slightly abusing the notation, we denote $f(z)$ as the birth 
time of the cycle $z$ in the filter $f$. 

It is not hard to see that
the birth time of any cycle $z$ is the maximum of the 
function values of its simplices, and thus, is the maximum of the 
function values of its vertices, formally, 
\begin{eqnarray*}
f(z)=\max_{q\in \vertex(z)} f(q).
\end{eqnarray*}
We prove Equation (\ref{eqn:firstUBCycleClose}) by contradiction. Suppose
\begin{eqnarray*}
f_{p_1}(z')-f_{p_2}(z'')\geq 2.
\end{eqnarray*}
We know that for any vertex $q\in \vertex(z'')$,
\begin{eqnarray*}
f_{p_2}(q)\leq f_{p_2}(z'')\leq f_{p_1}(z')-2.
\end{eqnarray*}
From Equation (\ref{eqn:filterClose}), we have 
\begin{eqnarray*}
&&f_{p_1}(q)\leq f_{p_2}(q)+1 \leq f_{p_1}(z')-1,\forall q\in \vertex(z''),\\
&\Rightarrow&f_{p_1}(z'')=\max_{q\in \vertex(z'')}f_{p_1}(q)\leq f_{p_1}(z')-1.
\end{eqnarray*} 
This contradicts the fact
that $z'$ is the first nonbounding cycle in the filter $f_{p_1}$. 
Therefore, the assumption is wrong, and
\begin{eqnarray*}
f_{p_1}(z')-f_{p_2}(z'')\leq 1.
\end{eqnarray*}
Similarly, we can prove that 
\begin{eqnarray*}
f_{p_2}(z'')-f_{p_1}(z')\leq 1.
\end{eqnarray*}
In summary, we have proven Equation (\ref{eqn:firstUBCycleClose}), and
consequently, proven the theorem.
\end{proof}

This theorem suggests a way to avoid computing $B_{p}^{r(p)}$
for all $p\in K$.
Recall that $r(p)$ is the radius of the smallest geodesic ball
centered at $p$ that carries any nonbounding cycle. 
Based on this theorem, we know that for any vertex $p_i$,
$r(p_i)\geq r(p_j)-1$ for any neighbor $p_j$.
Since our objective is to find the minimum of the $r(p)$'s, we can do a 
breadth-first search through all the vertices with 
global variables $r_{min}$ recording the smallest $r(p)$ we have found, \
and $p_{min}$ recording the corresponding center $p$. 

We start by applying the persistent homology algorithm on $K$ with
filter function $f_{p_0}$. Initialize $r_{min}$ as the birth time
of the first nonbounding cycle of $K$, $r(p_0)$, and $p_{min}$ as $p_0$. 
Next, we do a breadth-first
search through the rest vertices. For each vertex
$p_i, i\neq 0$, we know there exists a neighbor $p_j$ such that $r(p_j)\geq r_{min}$. 
Therefore, 
\begin{eqnarray*}
r(p_i)\geq r(p_j)-1\geq r_{min}-1.
\end{eqnarray*} 
We only need to test whether the geodesic ball
$B_{p}^{r_{min}-1}$ carries any nonbounding cycle of $K$. If so, $r_{min}$ is decremented
by one, and $p_{min}$ is updated to $p$. 

However, testing whether the subcomplex $B_{p}^{r_{min}-1}$ carries any
nonbounding cycle of $K$ is not as easy as computing nonbounding cycles 
of the subcomplex. A nonbounding cycle of $B_{p}^{r_{min}-1}$
may not be nonbounding in $K$ as we require.
For example, in Figure \ref{fig:torusWithTail}, we want to compute the smallest geodesic ball
centered at $p$ carrying any nonbounding cycle of $K$, $B_p^{r(p)}$.
The gray geodesic ball in the first figure does not carry any nonbounding
cycle of $K$, although it carries its own nonbounding cycles. The geodesic ball
in the second figure carries nonbounding cycles of $K$ and is the ball we want,
namely, $B_p^{r(p)}$. 
\begin{figure}[hbtp]
    \centerline{
    \begin{tabular}{lr}
		\includegraphics[width=0.2\textwidth]{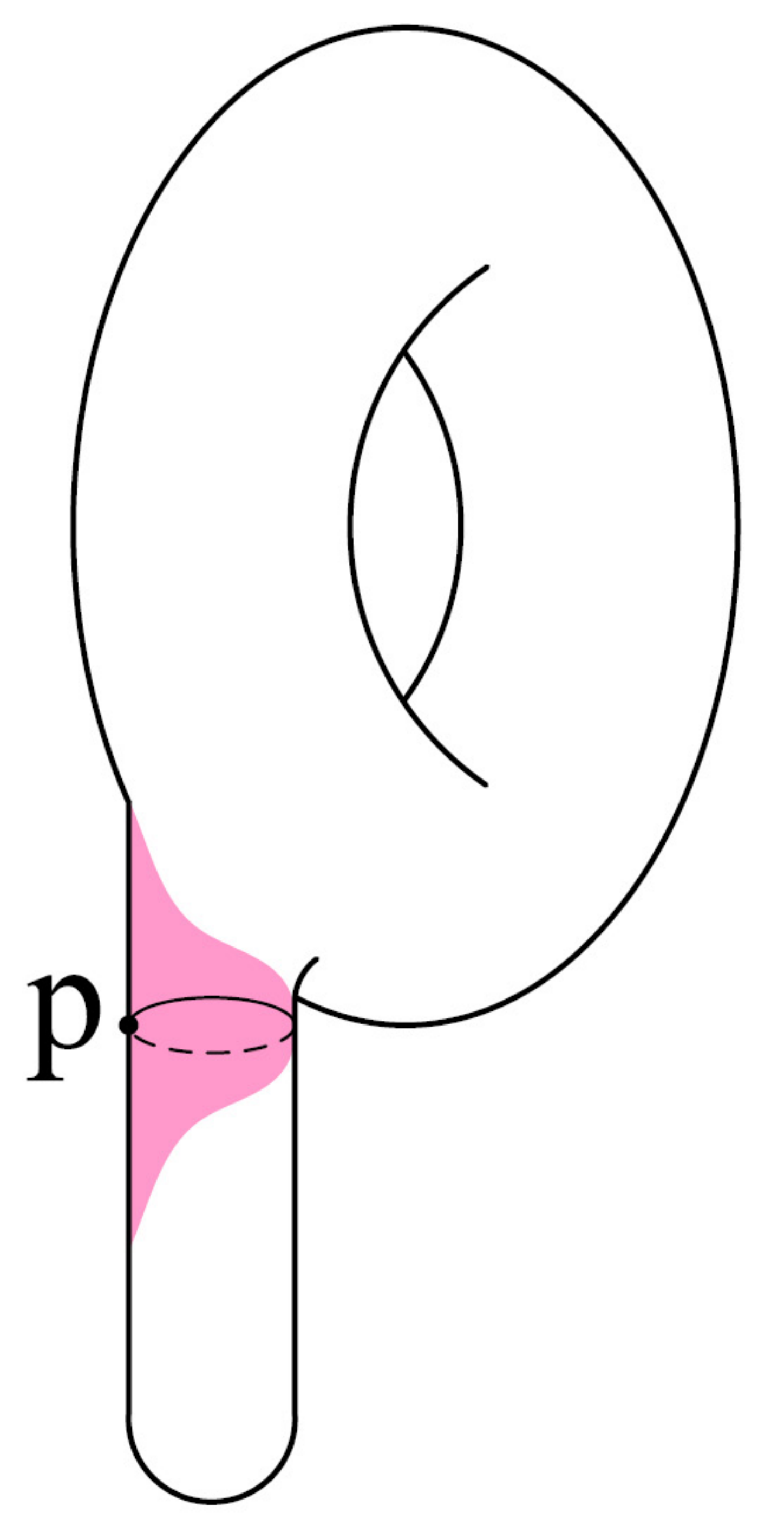} 
		\hspace{0.1in}&\hspace{0.1in}
		\includegraphics[width=0.2\textwidth]{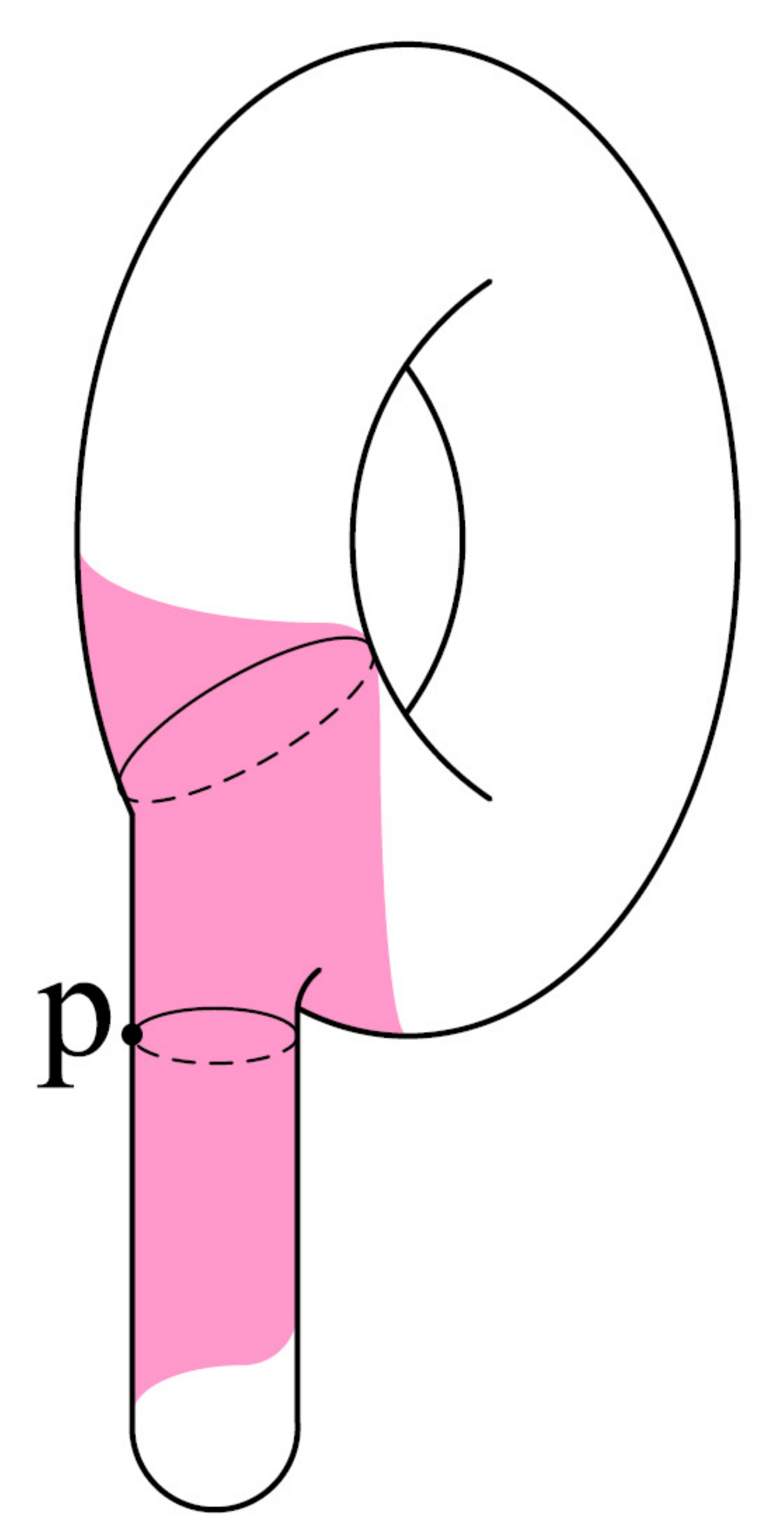} 
    \end{tabular}}
    \caption{ Computing $B_p^{r(p)}$ in a torus with tail. The ball in the
    second figure is what we want, although the one in the first figure has
    nontrivial topology.}
    \label{fig:torusWithTail}
\end{figure}
Therefore, we need algebraic tools to distinguish nonbounding cycles
of $K$ from those of the subcomplex $B_{p}^{r_{min}-1}$.

\subsection{Testing Whether a Subcomplex Carries Nonbounding Cycles of $K$}
In this subsection, we present the procedure for testing whether 
a subcomplex $K_0$ carries any nonbounding cycle of $K$. 
A chain in $K_0$ is a cycle if and only if it is a cycle of $K$.
However, solely from $K_0$, we are not able to tell whether a
cycle carried by $K_0$ bounds or not in $K$.
Instead, we write the set of cycles carried by $K_0$, 
$\mathsf{Z}_d^{K_0}(K)$, and the set
of boundaries of $K$ carried by $K_0$, $\mathsf{B}_d^{K_0}(K)$,
as sets of linear combinations with certain constraints. Consequently, 
we are able to test 
whether any cycle carried by $K_0$ is nonbounding in $K$ by comparing
the dimensions of $\mathsf{Z}_d^{K_0}(K)$ and $\mathsf{B}_d^{K_0}(K)$.
Theorem \ref{thm:rank} shows that these dimensions can be computed by
rank computations of sparse matrices.

\subsubsection{Expressing $\mathsf{Z}_d^{K_0}(K)$ and 
$\mathsf{B}_d^{K_0}(K)$ as Sets of Linear Combinations with Certain
Constrains}
The set of cycles and the set of boundaries of $K$ carried
by $K_0$ are
\begin{eqnarray*}
\mathsf{Z}_d^{K_0}(K)&=&\mathsf{Z}_d(K)\cap \mathsf{C}_d(K_0)\ {\rm and}\\
\mathsf{B}_d^{K_0}(K)&=&\mathsf{B}_d(K)\cap \mathsf{C}_d(K_0),
\end{eqnarray*}
respectively. Since $\mathsf{Z}_d(K)$, $\mathsf{B}_d(K)$ and 
$\mathsf{C}_d(K_0)$ are all vector spaces, $\mathsf{Z}_d^{K_0}(K)$
and $\mathsf{B}_d^{K_0}(K)$ are both vector spaces. Furthermore,
since $\mathsf{B}_d(K)$ is a subspace of $\mathsf{Z}_d(K)$,
$\mathsf{B}_d^{K_0}(K)$ is a subspace of $\mathsf{Z}_d^{K_0}(K)$.
It is not hard to show that the subcomplex $K_0$ carries nonbounding cycles of $K$ if and only if 
the dimensions of these two vector spaces are different.

We want to express these two vector spaces as linear combinations
such that we can compute their dimensions using algebraic tools.
We first express the vector spaces, $\mathsf{B}_d(K)$ and 
$\mathsf{Z}_d(K)$ as sets of linear combinations.
Since $\mathsf{B}_d(K)$ is the column space of $\partial_{d+1}$,
a boundary of $K$ can be written as the linear combination of 
column vectors of $\partial_{d+1}$. 
The boundary group can be written as the set of linear combinations
\begin{eqnarray*}
\mathsf{B}_d(K)=\{\partial_{d+1}\gamma \mid \gamma\in \mathbb{Z}_2^{n_{d+1}}\}.
\end{eqnarray*}

The cycle group $\mathsf{Z}_d(K)$ is the union of $\mathsf{B}_d(K)$ and all
the nonbounding cycles of $K$. Suppose we are given a basis for 
$\mathsf{H}_d(K)$, $\{h_1,...,h_{\beta_d}\}$, together with a cycle
for each $h_i$, namely, $z_i\in h_i$. Elements in $h_i$ can be written as
$z_i+\partial_{d+1}\gamma$. Furthermore, elements in $\mathsf{Z}_d(K)$
can be written as linear combinations of $\{b_1,...,b_{n_{d+1}},z_1,...,z_{\beta_d}\}$,
where the $b_j$'s are the column vectors of $\partial_{d+1}$.
We have
\begin{eqnarray*}
\mathsf{Z}_d(K)=\{\hat{Z}_d\gamma \mid \gamma\in \mathbb{Z}_2^{(n_{d+1}+\beta_d)}\},
\end{eqnarray*}
where $\hat{Z}_d=[\partial_{d+1},\hat{H}_d]$ and $\hat{H}_d=[z_1,...,z_{\beta_d}]$.
\begin{remark} 
In our algorithm, the boundary matrix $\partial_{d+1}$ is 
given. We can also precompute
the matrix $\hat{H}_d$ by computing an arbitrary basis of $\mathsf{H}_d(K)$ and
representative cycles of classes in this basis. More details will be provided
in Section \ref{sec:improvedAlg}.
\end{remark}

Since $\mathsf{C}_d(K_0)$ is the set of chain vectors whose $i$-th entry is
zero for any simplex $\sigma_i\notin K_0$, we can write $\mathsf{Z}_d^{K_0}(K)$
and $\mathsf{B}_d^{K_0}(K)$ as elements of $\mathsf{Z}_d(K)$
and $\mathsf{B}_d(K)$ whose $i$-th entries are zero. Consequently,
we can write them as linear combinations with certain constraints,
\begin{eqnarray*}
\mathsf{B}_d^{K_0}(K)&=&\{\partial_{d+1}\gamma \mid \gamma\in \mathbb{Z}_2^{n_{d+1}},
\partial_{d+1}^i\gamma=0 \forall \sigma_i\notin K_0\}\\
\mathsf{Z}_d^{K_0}(K)&=&\{\hat{Z}_d\gamma \mid \gamma\in 
\mathbb{Z}_2^{n_{d+1}+\beta_d},\hat{Z}_d^i\gamma=0 
\forall \sigma_i\notin K_0\}
\end{eqnarray*}
where $\partial_{d+1}^i$ and $\hat{Z}_d^i$ are the $i$-th rows of the matrices 
$\partial_{d+1}$ and $\hat{Z}_d$, respectively.

\subsubsection{Computing Dimensions by Computing Ranks of Sparse Matrices}
With the following theorem, we can compute the dimensions of these
two vector spaces $\mathsf{Z}_d^{K_0}(K)$ and $\mathsf{B}_d^{K_0}(K)$ 
by matrix rank computations.
\begin{theorem}
\label{thm:rank}
For any matrix $A=\inlineM{A_1\\A_2}$, 
$\dim(\{A\gamma\mid A_2\gamma=0\})=\rank(A)-\rank(A_2)$
\end{theorem}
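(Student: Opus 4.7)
The plan is to recognize $\{A\gamma \mid A_2\gamma = 0\}$ as the image of the linear map $A$ restricted to the null space of $A_2$, and then to apply the rank--nullity theorem to this restricted map. Write $n$ for the number of columns of $A$ (and hence of $A_2$), and consider the linear transformation $T : \ker(A_2) \to \mathbb{Z}_2^m$ (where $m$ is the row dimension of $A$) defined by $T(\gamma) = A\gamma$. The set whose dimension we want is precisely $\img(T)$.

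First I would compute the dimension of the domain: by rank--nullity applied to $A_2$ itself, $\dim \ker(A_2) = n - \rank(A_2)$. Next I would identify the kernel of $T$. By definition $\ker(T) = \ker(A) \cap \ker(A_2)$. The key observation is that $\ker(A) \subseteq \ker(A_2)$, because $A\gamma = 0$ forces $A_2\gamma = 0$ (the rows of $A_2$ are a subset of the rows of $A$). Hence $\ker(T) = \ker(A)$, whose dimension is $n - \rank(A)$.

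Applying rank--nullity to $T$ now gives
\begin{equation*}
\dim \img(T) = \dim \ker(A_2) - \dim \ker(T) = (n - \rank(A_2)) - (n - \rank(A)) = \rank(A) - \rank(A_2),
\end{equation*}
which is the claim. There is no real obstacle here: the only point requiring a moment's thought is the inclusion $\ker(A) \subseteq \ker(A_2)$, which collapses the intersection $\ker(A) \cap \ker(A_2)$ to just $\ker(A)$ and makes the two applications of rank--nullity combine cleanly.
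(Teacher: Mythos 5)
Your proof is correct, and it takes a genuinely different route from the paper's. You view the set $\{A\gamma \mid A_2\gamma = 0\}$ as the image of the restriction $T = A|_{\ker(A_2)}$ and apply rank--nullity twice: once to $A_2$ to get $\dim\ker(A_2) = n - \rank(A_2)$, and once to $T$ after observing that $\ker(T) = \ker(A)\cap\ker(A_2) = \ker(A)$ because the rows of $A_2$ are among the rows of $A$. This collapses the computation to $(n-\rank(A_2)) - (n-\rank(A))$. The paper instead argues concretely at the level of rows: it extracts $\alpha = \rank(A)-\rank(A_2)$ rows of $A_1$ that are independent modulo the row space of $A_2$, shows that $A\gamma$ is determined by the values of these rows on $\nullspace(A_2)$, and then proves by two contradiction arguments that these row functionals, restricted to $\nullspace(A_2)$, are surjective onto $\mathbb{Z}_2^{\alpha}$ and mutually independent. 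Your argument is shorter, avoids the somewhat delicate claims about linear independence of functionals restricted to a subspace, and works verbatim over any field; the paper's argument buys an explicit identification of which output coordinates realize the dimension, at the cost of length and some fiddly case analysis. Both are valid; yours is the more standard and more robust derivation.
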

\begin{proof}
For simplicity, denote $\alpha$ as $(\rank(A)-\rank(A_2))$.
There are $\rank(A)$ linearly independent rows in $A$, $\rank(A_2)$ 
linearly independent rows in $A_2$. Therefore, there are 
$\alpha$ rows in $A_1$ that are linearly independent, and
not linearly dependent on rows of $A_2$. 
Choose one such set of rows from $A_1$, 
$A_1'=\inlineM{a_1\\a_2\\ \cdots\\a_{\alpha}}$. 
Since all the rows of $A$ are dependent on rows in $A_1'$ and $A_2$,
for any $\gamma\in \nullspace(A_2)$, $A\gamma$ is determined
by $A_1'\gamma$. 

Proving the theorem is equivalent to showing that $A_1'\gamma$ can be an arbitrary vector
in the vector space $\mathbb{Z}_2^{\alpha}$. It is sufficient to show that for any row of $A_1'$,
$a_i$, the following two statements are both true:
\begin{enumerate}
\item There exist $\gamma_0,\gamma_1\in \nullspace(A_2)$, 
such that $a_i\gamma_0=0$ and $a_i\gamma_1=1$.

\item For any $\gamma\in \nullspace(A_2)$, $a_i\gamma$ does not linearly
depend on the products $a_j\gamma$ for the rest of the rows $a_j$ in $A_1'$.
\end{enumerate}

For the first statement, choose $\gamma_0=0\in \nullspace(A_2)$, which
satisfies $a_i\gamma_0=0$. Now we show that $\gamma_1$ exists
by contradiction. Suppose $a_i\gamma=0$ for all $\gamma\in \nullspace(A_2)$.
This implies that 
\begin{eqnarray*}
&&\nullspace(A_2)\subseteq \nullspace(\inlineM{a_i\\A_2})\\
&\Rightarrow& \rank(\inlineM{a_i\\A_2})\leq \rank(A_2).
\end{eqnarray*}
This contradicts the linear independence of $a_i$ with regard to $A_2$. Therefore,
$a_i\gamma$ can be either $0$ or $1$ for $\gamma\in \nullspace(A_2)$.
In fact, this statement is generally true for any row vector $a$ which is linearly
independent of the rows in $A_2$.

For the second statement, again we prove by contradiction. Suppose 
$a_i\gamma=\sum(a_j\gamma)$ for some rows of $A_1'$, the $a_j$'s.
Define a row vector $a_0=a_i-\sum(a_j)$. We have 
\begin{eqnarray*}
a_0\gamma=(a_i-\sum(a_j))\gamma=0.
\end{eqnarray*} 
Since $a_0$ is linearly independent
of $A_2$, this contradicts to the first statement we have just proved. 
By contradiction, the second statement is true.

In conclusion, for all $\gamma\in \nullspace(A_2)$, $A\gamma$ 
depends on $A_1'\gamma$, whose range space has dimension $\alpha$. 
\end{proof}

It is trivial to see that the order of the rows in these matrices does not 
interfere with the correctness of the theorem. Consequently, the
matrix $A_2$ can be a certain subset of the rows of $A$, not necessarily
the last few rows. Therefore, we can compute the dimensions of 
$\mathsf{B}_d^{K_0}(K)$ and $\mathsf{Z}_d^{K_0}(K)$ as
\begin{eqnarray*}
\dim(\mathsf{B}_d^{K_0}(K))&=&\rank(\partial_{d+1})-\rank(\partial_{d+1}^{K\backslash K_0}),{\rm and}\\
\dim(\mathsf{Z}_d^{K_0}(K))&=&\rank(\hat{Z}_d)-\rank(\hat{Z}_d^{K\backslash K_0}),
\end{eqnarray*}
where $\partial_{d+1}^{K\backslash K_0}$ and $\hat{Z}_d^{K\backslash K_0}$ are the matrices
formed by rows of $\partial_{d+1}$ and $\hat{Z}_d$ whose corresponding
simplices do not belong to $K_0$.

We test whether $K_0$ carries any nonbounding cycle of $K$ by
testing whether these two dimensions are different.
As we know, columns in $\hat{H}_d$ correspond to 
$\beta_d$ nonbounding cycles whose classes form a homology basis.
Therefore, the ranks of $\hat{Z}_d$ and $\partial_{d+1}$ differ by $\beta_d$.
$K_0$ carries nonbounding cycles of $K$ if and only if 
\begin{eqnarray*}
\rank(\hat{Z}_d^{K\backslash K_0})-\rank(\partial_{d+1}^{K\backslash K_0}) \neq \beta_d.
\end{eqnarray*}

\subsubsection{Procedure {\sf Contain-Nonbounding-Cycle($K$,$K_0$,$\hat{H}_d$)}}
With all the facts in hand, we are now ready to state the algorithm
for testing whether a subcomplex carries any nonbounding cycle of $K$.
We use the algorithm of Wiedemann \cite{Wiedemann86} for the rank computation. 
See Algorithm \ref{alg:containNonbounding} for
the pseudocode.
\begin{algorithm}[h!]
 \caption{ {\sf Contain-Nonbounding-Cycle($K$,$K_0$,$\hat{H}_d$)}}
 \label{alg:containNonbounding}

 \begin{algorithmic}[1]
   \GOAL   test whether $K_0$ carries nonbounding cycles of $K$.

    \INPUT $K$: the given simplicial complex.\\
    			 $K_0$: the subcomplex.\\
    			 $\hat{H}_d$: $\beta_d$ linearly independent nonbounding cycles of $K$.

    \OUTPUT Boolean.
		
		\STATE $\hat{Z}_d=[\partial_{d+1},\hat{H}_d]$
		\STATE compute $\partial_{d+1}^{K\backslash K_0}$ and $\hat{Z}_d^{K\backslash K_0}$ by picking
		rows of $\partial_{d+1}$ and $\hat{Z}_d$ whose corresponding simplices do not belong to $K_0$
    \IF{ $\rank(\hat{Z}_d^{K\backslash K_0})-\rank(\partial_{d+1}^{K\backslash K_0})\neq \beta_d$ }
			\STATE return true
		\ELSE
			\STATE return false
		\ENDIF
 \end{algorithmic}
\end{algorithm}


\subsection{The Improved Algorithm}
\label{sec:improvedAlg}

Next we present the improved version of the procedure {\sf Bmin(K)}.
Theorem \ref{thm:neighborClose} suggests performing a breadth-first search with
a global variable $r_{min}$ and testing whether $B_{p}^{r_{min}-1}$ contains
nonbounding cycles of $K$ for each $p$.
We use the procedure {\sf Contain-Nonbounding-Cycle($K$,$K_0$,$\hat{H}_d$)}
presented in the previous subsection for the testing. See Algorithm \ref{alg:smallestFast}.
\begin{algorithm}[h!]
 \caption{ {\sf Bmin(K)}}
 \label{alg:smallestFast}

 \begin{algorithmic}[1]
 \GOAL  computing $B_{min}(h_{min})$, improved version.

    \INPUT $K$: the given simplicial complex.

    \OUTPUT $p_{min}$,$r_{min}$:the center and radius of $B_{min}(h_{min})$.

    \STATE precompute $\hat{H}_d$
		\STATE compute a breadth-first ordering of $\vertex(K)$, $(p_1,...,p_{n_0})$.
    \STATE apply the persistent homology algorithm on $K$ with filter function $f_{p_1}$
    \STATE $r_{min}$ $=$ the birth time of the first essential homology class
    \STATE $p_{min}$ $=$ $p_1$
    \FOR{$i=2$ to $n_0$}
			\IF{ {\sf Contain-Nonbounding-Cycle($K$,$B_{p_i}^{r_{min}-1}$,$\hat{H}_d$)} }
				\STATE $r_{min}$ $=$ $r_{min}-1$
				\STATE $p_{min}$ $=$ $p_i$
			\ENDIF
		\ENDFOR
 \end{algorithmic}
\end{algorithm}

\paragraph{ Precomputing $\hat{H}_d$.}
The improved algorithm requires the computation of the matrix $\hat{H}_d$, which consists of
$\beta_d$ nonbounding cycles representing elements of a basis of $\mathsf{H}_d(K)$.
For this purpose, any basis is acceptable.
We can precompute $\hat{H}_d$ in a similar way to the 
procedure {\sf Localized-Cycle($p_{min}$,$r_{min}$,$K$)} 
(Algorithm \ref{alg:Localized-Cycle}). More specifically, we perform a column reduction 
on the boundary matrix
$\partial_d$ to compute a basis for the cycle group $\mathsf{Z}_d(K)$. 
We check elements in this basis one by one until we collect $\beta_d$ of them
forming $\hat{H}_d$. For each cycle $z$ in this cycle basis, we check whether
$z$ is linearly independent of the $d$-boundaries and the nonbounding cycles we 
have already chosen, i.e. whether
\begin{eqnarray*}
\rank([z,\partial_{d+1},\hat{H}_d'])\neq \rank([\partial_{d+1},\hat{H}_d']),
\end{eqnarray*}
where $\hat{H}_d'$ consists of cycles we have already chosen for $\hat{H}_d$.
More details are omitted due to the space limitation

\subsection{Complexity}
We analyze the complexity of the improved algorithm.
Denote $n$ and $m$ as the cardinalities
of $K$ and $K'$, respectively. As we know, $m=O(\beta_dn)$.
Similar to the analysis of the non-refined algorithm, the improved algorithm 
{\sf Measure-All($K$)} runs the procedures {\sf Bmin} and {\sf Localized-Cycle}
$\beta_d$ times, with $K'$ as the input. The procedure {\sf Localized-Cycle} 
takes $O(m^3\log^2m)$ time.

The improved procedure {\sf Bmin} precomputes $\hat{H}_d$ once, applies 
the persistent homology algorithm on $K'$ once, and runs the procedure
{\sf Contain-Nonbounding-Cycle} $O(n)$ times. Precomputing $\hat{H}_d$
runs the rank computation $O(m)$ times on matrices with $O(m+\beta_d)=O(m)$ columns
and $O(\beta_dm)$ nonzero entries, and thus takes $O(m^3\log m(\beta_d+\log m))$ time.
The persistent homology algorithm takes $O(m^3)$ time. The 
procedure {\sf Contain-Nonbounding-Cycle} performs rank computations on
matrices with $O(m+\beta_d)=O(m)$ columns
and $O(\beta_dm)$ nonzero entries, and thus takes $O(m^2\log m(\beta_d+\log m))$ time.
Therefore, the procedure {\sf Bmin} takes 
$O(m^3\log m(\beta_d+\log m)+m^3+nm^2\log m(\beta_d+\log m))=O(m^3\log m(\beta_d+\log m))$ time.

Therefore, the whole improved algorithm takes 
$O(\beta_dm^3\log m(\beta_d+\log m))=O(\beta_d^4n^3\log^2 n)$
time.


\section{Consistency with Existing Works in Low Dimension}
\label{sec:lowDimResult}

Erickson and Whittlesey \cite{EricksonW05} measured a 1-dimensional
homology class using the length of its shortest cycle. 
They computed the optimal homology basis by finding the set
of nonbounding and linearly independent cycles whose lengths have the
minimal sum. Their algorithm works for $1$-dimensional homology classes
in $2$-manifolds.

We prove in Theorem \ref{thm:optimalInLowDim} that our measure, $S(h)$, is quite close 
to their measure for $1$-dimensional homology classes. For ease of exposition, 
we first prove in Lemma \ref{lem:modifiedAlg} that
by slightly modifying our algorithm of computing the localized cycle,
we can localize the smallest $1$-dimensional homology class, $h_{min}$,
with a representative cycle whose length is no more than $2S(h)+1$.
We start with the modification.

Recall that in the procedure {\sf Localized-Cycle($p_{min}$,$r_{min}$,$K$)},
a localized cycle of $h_{min}$ is computed, given the smallest geodesic ball
carrying $h_{min}$, $B_{min}(h_{min})$, whose center and radius are $p_{min}$
and $r_{min}$, respectively. More specifically, we compute a basis of
the cycles carried by $B_{min}(h_{min})$ by performing a column reduction on
$\partial_d'$, a submatrix of the boundary matrix, $\partial_d$. The submatrix
is constructed by picking columns of $\partial_d$ whose corresponding 
simplices belong to $B_{min}(h_{min})$.

\paragraph{A Modification} When the relevant dimension $d=1$, we modify our 
algorithm as follows. Before performing a column reduction on the submatrix 
$\partial_1'$, we sort its rows and columns in ascending order according to
the function value $f_{p_{min}}$ of their corresponding 1-simplices, 
that is, edges. For edges with the same function value, we sort them 
in ascending order according to the minimal function value of their vertices.
After the sorting, we perform a column reduction on $\partial_1'$ to compute
a basis for the cycles carried by $B_{min}(h_{min})$. The rest is the
same as the original algorithm.

Next, we prove that this modification will produce a localized cycle 
of $h_{min}$ whose length is no greater than $2S(h_{min})+1$.

\begin{lemma}
\label{lem:modifiedAlg}
The modified algorithm localizes the smallest $1$-dimensional 
homology class, $h_{min}$, with a 1-cycle with no more than $2S(h_{min})+1$ edges.
\end{lemma}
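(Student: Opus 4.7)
The plan is to show that, with the modified ordering, the column reduction of $\partial_1'$ implicitly builds a breadth-first spanning tree $T$ of $B_{p_{min}}^{r_{min}}$ rooted at $p_{min}$, and that every column of $V$ corresponding to a zero column of $R$ is a fundamental cycle of $T$ and therefore has at most $2 r_{min} + 1$ edges. Consequently every basis cycle of $\mathsf{Z}_1(B_{min}(h_{min}))$ produced by the reduction, including the first one the algorithm finds to be nonbounding in $K$, satisfies the claimed length bound.

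First I would analyze which columns survive the reduction with a nonzero low entry. Under the tie-breaking rule, a cross-level edge $(u,v)$ with $f_{p_{min}}(u) = t-1$ and $f_{p_{min}}(v) = t$ comes strictly before any within-level edge at level $t$. By induction on the processing order I would argue that for each vertex $v$ of $B_{p_{min}}^{r_{min}}$ at distance $t > 0$, the first edge in the ordering whose higher endpoint is $v$ is such a cross-level edge, and that its column survives the reduction with $\low = v$. The surviving edges form a spanning subgraph $T \subseteq B_{p_{min}}^{r_{min}}$ which is a BFS tree rooted at $p_{min}$, so that the unique tree path $P_T(v, p_{min})$ has length exactly $f_{p_{min}}(v)$.

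Next I would trace the reduction of an edge $e^* = (u^*, v^*)$ whose column reduces to zero. Its initial column has nonzero rows only at $u^*$ and $v^*$; at each step the current lowest nonzero row is cancelled by adding the unique surviving ``tree'' column with that row as its low, walking up the tree from $v^*$ to $p_{min}$ and from $u^*$ to $p_{min}$. The corresponding column of $V$ is therefore $e^* + P_T(u^*, p_{min}) + P_T(v^*, p_{min})$, which over $\mathbb{Z}_2$ has at most
\begin{equation*}
f_{p_{min}}(u^*) + f_{p_{min}}(v^*) + 1 \; \leq \; 2\, f_{p_{min}}(e^*) + 1 \; \leq \; 2\, S(h_{min}) + 1
\end{equation*}
edges, since the shared initial portion of the two tree paths cancels.

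The main obstacle is establishing the BFS-tree claim of the first step. One must carefully verify that the secondary tie-breaking on minimum vertex-value always forces a cross-level candidate to survive before a within-level edge at the same level could be assigned a tree slot with endpoint $v$, and that subsequent reductions of non-surviving columns only ever combine with surviving tree columns (never with previously zeroed columns, since those have no low). Once this combinatorial fact is secured, the length bound follows mechanically from the reduction trace, and combines with the existence of at least one nonbounding-in-$K$ cycle in the computed basis (guaranteed because $h_{min}$ is carried by $B_{min}(h_{min})$) to yield the lemma.
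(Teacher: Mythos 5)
Your proposal is correct and follows essentially the same route as the paper: the paper's proof likewise observes that the tie-breaking forces every vertex $q\neq p_{min}$ to be paired with one of its ``lower'' (cross-level) edges, so that only lower edges are used to reduce a zero column, and the resulting cycle is the distinguished edge plus descending paths toward $p_{min}$. Your explicit shortest-path-tree formulation and the trace $e^*+P_T(u^*,p_{min})+P_T(v^*,p_{min})$ simply make rigorous the final edge count that the paper states tersely.
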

\begin{proof}
For simplicity, 
we prove the case when $K$ has only one connected component.
The general case follows simply.

Because of the properties of the geodesic distance, we observe the
following two facts.
\begin{enumerate}
\item For any edge, the function values of its vertices differ in no
more than $1$.

\item For each vertex, $q\neq p_{min}$, there exists
at least one edge with vertices $q$ and $q'$, such that
\begin{eqnarray*}
f_{p_{min}}(q')=f_{p_{min}}(q)-1.
\end{eqnarray*} 
By {\it lower edges}, we denote edges whose two vertices have different function 
values.
\end{enumerate}

These facts imply that in the modified algorithm, a column is reduced
to a nonzero column only if its corresponding edge is a lower edge. 
To see this, notice that in the simplex-ordering corresponding to the
sorted $\partial_d'$, for any vertex
$q\neq p_{min}$, among all the edges adjacent to it, lower edges
must appear first. During the reduction, 
$q$ must be paired with one of its lower edges. Since $p_{min}$
corresponds to the $0$-dimensional essential homology class,
it is not paired by any edge. Therefore, any edge paired with
a vertex is a lower edge. Any column which is reduced to a
nonzero column corresponds to a lower edge.

The localized cycle we compute, $z_{min}$, is one of the columns
of $V$, corresponding to zero columns in $R$, where 
$R=\partial_1'V$. Let it be the $i$-th column, corresponding to 
$\sigma_i$. It is straightforward
to see that only columns corresponding to lower edges are used to 
reduce column $i$ of $\partial_1'$. Consequently, in the computed
localized cycle, any edge beside $\sigma_i$ is a lower edge, and 
thus has two vertices whose function values differ in one. Since 
edge $i$ has the function value $S(h_{min})$, $z_{min}$ has no
more than $2S(h_{min})+1$ edges.
\end{proof}

For example, in Figure \ref{fig:smallerCircle}, $B_{min}(h_{min})$ is
centered at $p_1$ with radius two. Using the modified algorithm,
edge $p_3p_4$ corresponds to the nonbounding cycle. Its column is reduced
using edges $p_1p_2$, $p_2p_3$, $p_4p_5$ and $p_1p_5$, which are all
lower edges. The computed localized cycle has length $5=2S(h_{min})+1$.
\begin{figure}[hbtp]
    \centering
    \includegraphics[width=0.4\textwidth]{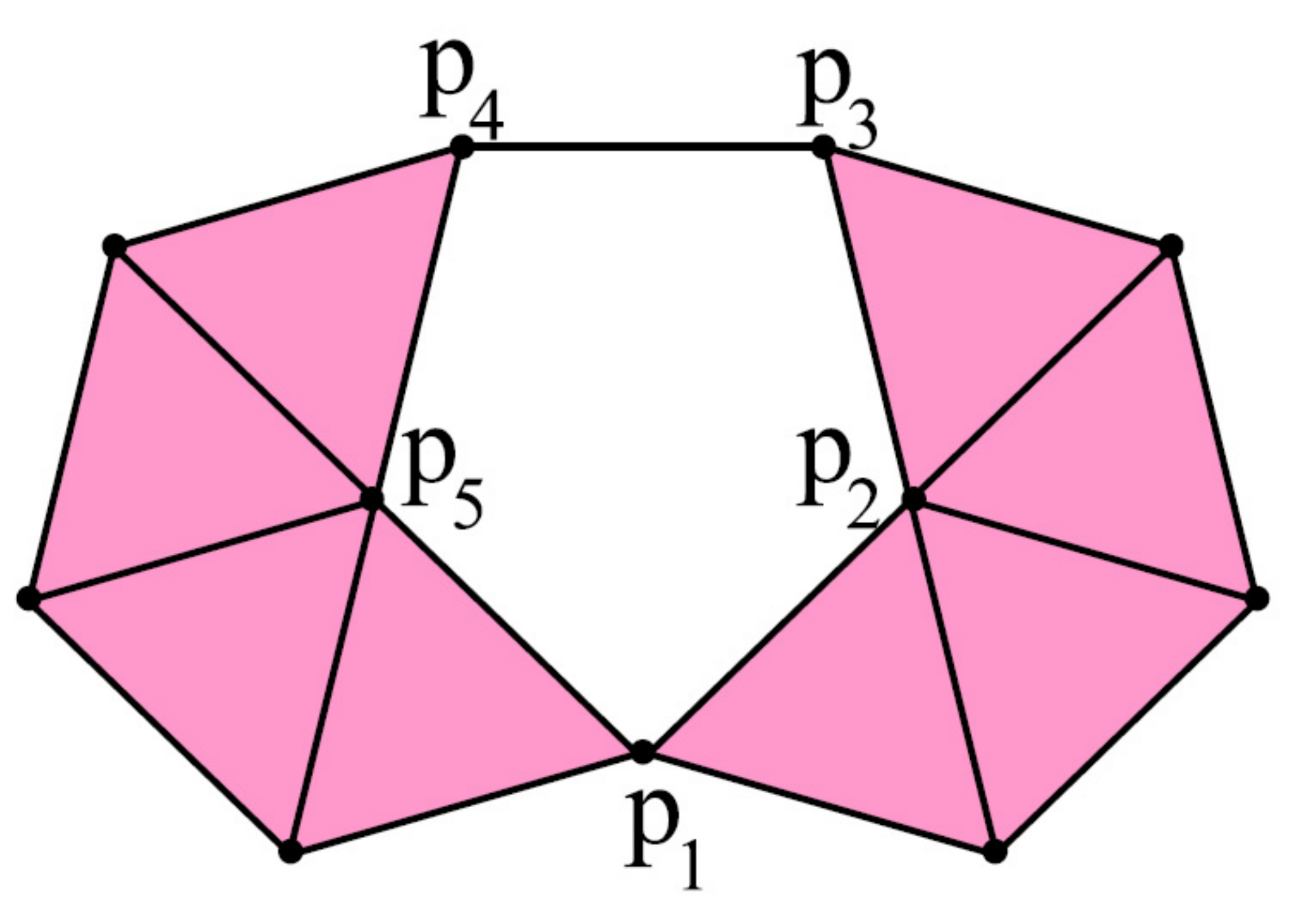}
    \caption{ Edge $p_3p_4$ corresponds to the localized cycle whose length
    is $2S(h_{min})+1$.}
    \label{fig:smallerCircle}
\end{figure}

Based on this Lemma, we prove that our result is close to the result of \cite{EricksonW05},
in which size of a $1$-dimensional homology class is the length or its shortest
representative cycle, namely,
\begin{eqnarray*}
S_E(h)=\min_{z\in h} \length(z), h\in \mathsf{H}_1(K).
\end{eqnarray*}
\begin{theorem}
For a $1$-dimensional homology class $h$,
\begin{eqnarray*}
2S(h)\leq S_E(h)\leq 2S(h)+1.
\end{eqnarray*}
\label{thm:optimalInLowDim}
\end{theorem}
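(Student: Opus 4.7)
The plan is to prove the two inequalities separately: the lower bound $2S(h)\leq S_E(h)$ follows from a simple ``center the shortest cycle on one of its own vertices'' argument, while the upper bound $S_E(h)\leq 2S(h)+1$ is essentially a corollary of Lemma \ref{lem:modifiedAlg}.

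For the lower bound, I would let $z^\ast\in h$ be a shortest representative, so $L:=\length(z^\ast)=S_E(h)$, and fix any vertex $p\in\vertex(z^\ast)$. Because $z^\ast$ is a simple closed loop in the $1$-skeleton, every other vertex $q\in\vertex(z^\ast)$ is reachable from $p$ along the shorter of the two arcs of $z^\ast$, which uses at most $\lfloor L/2\rfloor$ edges. Hence $\dist(p,q)\leq \lfloor L/2\rfloor$ for each such $q$, and therefore $f_p(\sigma)\leq \lfloor L/2\rfloor$ for every edge $\sigma$ of $z^\ast$. This shows $z^\ast\subseteq B_p^{\lfloor L/2\rfloor}$, so by Definition \ref{def:size}, $S(h)\leq \lfloor L/2\rfloor\leq L/2$, which rearranges to $2S(h)\leq S_E(h)$.

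For the upper bound, I would invoke Lemma \ref{lem:modifiedAlg} directly: the modified procedure produces an explicit representative cycle $z_{\min}\in h$ with $\length(z_{\min})\leq 2S(h)+1$. Since $S_E(h)$ is the minimum length over all cycles in $h$, this immediately yields $S_E(h)\leq \length(z_{\min})\leq 2S(h)+1$. No further BFS/reduction analysis is required beyond what was already done in proving the lemma.

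The one delicate point — and what I expect to be the main obstacle — is that Lemma \ref{lem:modifiedAlg} is stated only for the smallest class $h_{\min}$ of the input complex, whereas the theorem is phrased for an arbitrary class $h$. I would resolve this by running the sealing algorithm {\sf Measure-All}: by Theorem \ref{thm:sealingUp}, each class in the optimal basis $\mathcal{H}_1$ arises as the $h_{\min}$ of some augmented complex $K^{i-1}$, and since the sealing simplices are assigned filter value $+\infty$ they do not interfere with any finite geodesic ball in $K$, so the length bound $2S(h)+1$ from the lemma still refers to a cycle living entirely in $K$. Thus the upper bound extends to every class appearing in $\mathcal{H}_1$, which is the natural comparison class for the Erickson--Whittlesey result.
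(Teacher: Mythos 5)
Your proposal follows essentially the same route as the paper's proof: the lower bound is obtained by centering a geodesic ball at a vertex of the shortest representative cycle (the paper computes the radius as $S_E(h)/2$ or $(S_E(h)-1)/2$, i.e.\ exactly your $\lfloor L/2\rfloor$), and the upper bound is a direct invocation of Lemma \ref{lem:modifiedAlg}. Where you diverge is in handling the fact that Lemma \ref{lem:modifiedAlg} is stated only for $h_{min}$: the paper applies the lemma to an arbitrary class $h$ without comment, whereas you patch the scope via the sealing argument of Theorem \ref{thm:sealingUp}. Your caution is warranted --- this is a real looseness in the paper's own write-up --- but be aware that your patch delivers the upper bound only for the classes appearing in the optimal basis $\mathcal{H}_1$, which is strictly weaker than the theorem as stated for \emph{every} $1$-dimensional class $h$. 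To recover the full statement one would instead run the modified reduction on $B_{min}(h)$ for the given class $h$ and argue that a short basis cycle produced there actually represents $h$, which needs an assumption on that ball analogous to Footnote \ref{foot:hminUnique}; as written, neither your argument nor the paper's fully closes this for arbitrary $h$.
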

\begin{proof}
Lemma \ref{lem:modifiedAlg} shows that there exists a representative cycle of $h$ with
no more than $2S(h)+1$ edges. Therefore, the shortest representative cycle of $h$ has
no more than $2S(h)+1$ edges. We have
\begin{eqnarray*}
S_E(h)\leq 2S(h)+1.
\end{eqnarray*}

Next, we show that 
\begin{equation}
2S(h)\leq S_E(h).
\label{eqn:size'BoundsSize}
\end{equation}
Pick the shortest representative cycle $z_0$ with length $S_E(h)$. Choose
any vertex $p\in z_0$ as the center to build a smallest geodesic ball
carrying $z_0$. The radius of this ball is $S_E(h)/2$ when 
$S_E(h)$ is even, and $(S_E(h)-1)/2$ when $S_E(h)$ is odd.
Since $S(h)$ is no greater than this radius, 
Equation (\ref{eqn:size'BoundsSize}) is proved.
\end{proof}

This theorem shows that our measure tightly bounds the one by Erickson 
and Whittlesey. Furthermore, we know the localized cycles computed are 
almost the shortest ones.
\begin{corollary}
The localized cycle of $h$ computed by the modified algorithm has at most 
one more edge than the shortest representative cycle of $h$.
\end{corollary}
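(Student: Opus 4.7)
The proof is essentially an arithmetic combination of Lemma \ref{lem:modifiedAlg} and Theorem \ref{thm:optimalInLowDim}, so the plan is just to write it out cleanly.

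My plan is to let $z_{loc}$ denote the localized cycle of $h$ produced by the modified algorithm, and let $z^*$ denote a shortest representative cycle of $h$, so $\length(z^*) = S_E(h)$. By Lemma \ref{lem:modifiedAlg} applied to $h$ (the lemma is stated for $h_{min}$, but the modified algorithm's reasoning only uses that $z_{loc}$ arises from a column reduction restricted to the smallest geodesic ball carrying $h$, so the same bound $\length(z_{loc}) \le 2S(h)+1$ holds). On the other side, Theorem \ref{thm:optimalInLowDim} gives $2S(h) \le S_E(h)$, hence $2S(h)+1 \le S_E(h)+1$. Chaining the two bounds yields
\begin{equation*}
\length(z_{loc}) \;\le\; 2S(h)+1 \;\le\; S_E(h)+1 \;=\; \length(z^*)+1,
\end{equation*}
which is exactly the claim.

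The only subtle point, and the main thing I would be careful about, is the applicability of Lemma \ref{lem:modifiedAlg} to an arbitrary class $h$ rather than the globally smallest class $h_{min}$. The lemma's proof only uses (i) that we are reducing the boundary submatrix $\partial_1'$ restricted to the smallest geodesic ball carrying the class, and (ii) the two facts about discrete geodesic distance (adjacent vertices differ in $f_{p_{min}}$-value by at most $1$, and every non-center vertex has a strictly descending edge). Both carry over verbatim when $p_{min}$ and $r_{min}$ are replaced by the center and radius of $B_{min}(h)$ for any $h$ computed during the sealing-based iteration of \textsf{Measure-All}. I would note this explicitly in one sentence before invoking the lemma.

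Given this, the corollary reduces to the two-line inequality chain above, so I would keep the write-up short: state $z_{loc}$ and $z^*$, cite the lemma, cite the left inequality of Theorem \ref{thm:optimalInLowDim}, and combine.
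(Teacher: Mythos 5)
Your proof is correct and is exactly the argument the paper intends (the corollary is left unproved there as an immediate consequence of Lemma \ref{lem:modifiedAlg} and the inequality $2S(h)\leq S_E(h)$ from Theorem \ref{thm:optimalInLowDim}). Your explicit remark about why the lemma extends from $h_{min}$ to a general class $h$ is a worthwhile addition, since the paper glosses over that point.
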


\begin{remark}
In fact, the algorithm can be further modified to generate exactly
the same result as the one by Erickson and Whittlesey. We omit
this because it involves more technical details and does not
provide any new insights.
\end{remark}
\begin{remark}
Our modified algorithm can compute the shortest representative cycle for 1-dimensional homology
classes no matter what dimension $K$ is, whereas most of the existing works in low
dimension require $K$ to be dimension two.
\end{remark}


\section{Conclusion}
In this paper, we have defined a size measure of homology classes, found cycles
localizing these classes, as well as computed an optimal homology basis for the homology group.
An $O(\beta^4 n^4)$ brute force algorithm has been presented, which measures and localizes
the optimal homology basis by applying the persistent
homology algorithm on the simplicial complex $\beta n$ times. Aided by
Theorem \ref{thm:neighborClose} and \ref{thm:rank}, we have improved the algorithm to
$O(\beta^4 n^3\log^2 n)$. Finally, we have shown that our result is similar
to the existing optimal result in low dimensions.

\paragraph{ Future directions.}
We intend to extend our work in two directions.
\begin{enumerate}
\item In this paper, a localized cycle $z_0\in h$ satisfies the condition
\begin{eqnarray*}
\rad(z_0)=\min_{p\in K}\max_{q\in \vertex(z_0)} dist(p,q)=\min_{z\in h} \rad(z).
\end{eqnarray*}
Can we localize $h$ with a representative cycle using other size measures? Examples of such 
measures are:
\begin{eqnarray*}
\card(z_0)&=&\min_{z\in h} \card(z),\\
\diam(z_0)&=&\max_{p,q\in \vertex(z_0)} dist(p,q)=\min_{z\in h} \diam(z),{\rm and}\\
\radZ(z_0)&=&\min_{p\in \vertex(z_0)}\max_{q\in \vertex(z_0)} dist_{z_0}(p,q)=\min_{z\in h} \radZ(z),
\end{eqnarray*}
where $\card(z)$ is number of simplices in the cycle $z$ and
$dist_{z_0}(q,p)$ is the geodesic distance between $p$ and $q$
within the representative cycle $z_0$.
We conjecture computing $z_0$ satisfying the first two constraints are NP-complete.

\item Can we extend the results if we replace the discrete geodesic distance with
continuous metric defined on the underlying space of the simplicial complex?
\end{enumerate}

\section*{Acknowledgment}
We would like to thank Herbert Edelsbrunner for helpful discussion on computing 
representative cycles of homology classes in the persistent homology algorithm.

\bibliographystyle{abbrv}
\bibliography{topologyBib}
\end{document}